\documentclass[a4, 9pt]{article}
%%%%%INFOCOM 2014 addition:
\makeatletter
\usepackage[margin=1in]{geometry}
%%%define the title
\usepackage{amsmath}
\usepackage{amssymb}
\usepackage{float}
\usepackage{psfrag}
\usepackage[dvips]{graphicx,color}
\usepackage{cite}
\usepackage{authblk}
%%%define the title
%%
\newtheorem{theorem}{Theorem}[section]
\newtheorem{lemma}[theorem]{Lemma}
\newtheorem{proposition}[theorem]{Proposition}

\newtheorem{definition}{Definition}
\newenvironment{proof}[1][Proof]{\begin{trivlist}
\item[\hskip \labelsep {\bfseries #1}]}{\end{trivlist}}

\newcommand{\BigO}[1]{\ensuremath{\operatorname{O}\left(#1\right)}}
\newcommand{\qed}{\nobreak \ifvmode \relax \else
      \ifdim\lastskip<1.5em \hskip-\lastskip
      \hskip1.em plus0em minus0.5em \fi \nobreak
      \vrule height0.45em width0.65em depth0.25em\fi}

\title{Maximizing Utility Among Selfish Users in Social Groups}
\author[1]{Ashwin Pananjady\thanks{ee10b025@ee.iitm.ac.in}}
\author[1]{Vivek Kumar Bagaria\thanks{ee10b047@ee.iitm.ac.in}}
\author[2]{Rahul Vaze\thanks{vaze@tcs.tifr.res.in}}
\affil[1]{Department of Electrical Engineering, Indian Institute of Technology Madras}
\affil[2]{School of Technology and Computer Science, Tata Institute of Fundamental Research}

\begin{document}
%
%
% generates the title
\maketitle
\begin{abstract}
We consider the problem of a social group of users trying to obtain a ``universe" of files, first from a server and then via exchange amongst themselves. We consider the selfish file-exchange paradigm of give-and-take, whereby two users can exchange files only if each has something unique to offer the other. We are interested in maximizing the number of users who can obtain the universe through a schedule of file-exchanges. We first present a practical paradigm of file acquisition. We then present an algorithm which ensures that at least half the users obtain the universe with high probability for $n$ files and $m=\BigO{\log n}$ users when $n\rightarrow\infty$, thereby showing an approximation ratio of $2$. Extending these ideas, we show a $1+\epsilon_1$ - approximation algorithm for $m=\BigO{n}$, $\epsilon_1>0$ and a $(1+z)/2 +\epsilon_2$ - approximation algorithm for $m=\BigO{n^z}$, $z>1$, $\epsilon_2>0$. Finally, we show that for any $m=\BigO{e^{o(n)}}$, there exists a schedule of file exchanges which ensures that at least half the users obtain the universe.
\end{abstract}
\section{Introduction}
\label{sec:intro}  
We consider a peer-to-peer (P2P) network scenario, where each node (or user) desires a certain fixed set of files residing on a central server, e.g. a full high-definition movie \cite{knoke2008social}. For each node, the cost associated with downloading all the files from the server is prohibitively large. However, the cost of exchanging files across the nodes is very low. The high cost of file acquisition from the server could be due to large delay, license fee, power etc., while the low cost of P2P exchange could be due to nodes sharing bandwidth, physical proximity between the nodes, mutual cooperation, etc. Popular applications based on this concept include web caching, content distribution networks, P2P networks, etc. Device-to-device (D2D) communication in cellular wireless networks is another example, where mobile phones cooperate with one another to facilitate communication.

Clearly, to keep cost low, it is intuitive for each node to download a small fraction of files from the server, and then obtain the remaining files through exchanges with other nodes. Free riding -the phenomenon in which a node does not download anything and still obtains all the files from other nodes - is possible in P2P networks, and can lead to attacks like whitewashing, collusion, fake services, Sybil attack \cite{karakaya2009free,fox2001peer,dinger2006defending}, etc. To avoid free riding \cite{feldman2006free,feldman2005overcoming,locher2006free,rahman2010improving,nishida2010global,karakaya2009free,mol2008give}, the file exchange among nodes follows a \emph{give-and-take} criterion. Two nodes can exchange files if both have some file(s) to offer each other.

The problem we consider in this paper is to find an algorithm for scheduling file exchanges between nodes such that at the end of algorithm, when no more exchanges are possible due to the condition imposed by the give-and-take criterion, the number of nodes that receive all files is maximized. Prior work \cite{aggarwal2013social} has considered maximizing the aggregate number of files received by all nodes. Depending on the utility of the user, e.g. watching a movie, maximizing the number of nodes that obtain all the files could be more important than maximizing the total number of files across the network.

Finding an optimal algorithm for scheduling file exchanges with the give-and-take model is challenging, since at each step there are many pairs of nodes that satisfy the give-and-take criterion and the choice of exchange at each step determines the final outcome. As a consequence, the number of feasible schedules is exponential. We therefore concern ourselves with providing approximation algorithms for the problem that have polynomial time complexity. An algorithm to solve a maximization problem is said to have an approximation ratio $\rho > 1$ if it always returns a solution greater than $\frac{1}{\rho}$ times the optimal solution. We call such an algorithm a $\rho$ approximation algorithm, or simply a $\rho$ algorithm.

We deal with the following file acquisition paradigm. Let each user download each file from the server independently with probability $p$. Let there be a total of $n$ files and $m$ users. Note that the download cost for each user is directly proportional to $p$, since the cost follows a binomial distribution. Users can either choose $p$ before entering the system (a-priori) or after (a-posteriori), in which case it could depend on the number of users or files. In this paper, we propose a deterministic iterative tree-splitting algorithm with polynomial complexity. The algorithm involves recursively dividing the users into two equally-sized groups, thereby forming a tree. The divisions are defined appropriately, keeping the give-and-take criterion in mind. Exchanges are then effected from the leaves up to the root of the tree.

Our contributions for various regimes of $m$ and $n$ are listed below. All of these hold with very high probability as $n\rightarrow\infty$:
%\textbf{(i)} For constant $m$, our tree splitting algorithm has an approximation ratio $\rho =2$, even when $p$ is chosen a-priori.
\begin{itemize}
\item For $m=\BigO{\log n}$, our algorithm has an approximation ratio $\rho =2$, provided $p$ can be chosen a-posteriori, so the cost to each user is determined by the number of users and files.
\item When $m=\BigO{n}$, we present an algorithm that has an approximation ratio of $\rho =2$ when $p$ is chosen a-priori, which implies that at least half the users can obtain the universe at a vanishingly small cost to themselves (by minimizing $p$), and an approximation ratio of $\rho =1+\epsilon_1$ when $p$ can be chosen a-posteriori $(\epsilon_1>0)$, in which case the cost is dependent on the number of users and files.
\item When $m = \BigO{n^z}$ for some $z>1$, we present an algorithm that has an approximation ratio of $\rho=1+z$ when $p$ is chosen a-priori, and an approximation ratio of $\rho=\frac{1+z}{2}+\epsilon_2$ when $p$ can be chosen a-posteriori $(\epsilon_2>0)$.
\item For any $m = \BigO{e^{o(n)}}$, we show that there exists a schedule of file exchanges that ensures that at least half the users obtain the universe of files.
\end{itemize}

\section{Preliminaries}
\subsection{Notation} \label{notation}
\begin{enumerate}
\item $T$: set of files on the server (the complete universe); $|T|=n$.
\item $U = \{u_1, u_2, \ldots, u_m\}$: set of all users.
\item $C_i^t$: set of files that user $u_i$ possesses at time step $t$. $C_i^0$ represents the files that user $u_i$ possesses at the start, i.e. before any exchanges.
\item $F = \bigcup_{i}C_i^0=\{1,2,3,\ldots, f\}$: set of all files in the achievable universe.
\item $u_i\leftrightarrow u_j$: exchange between users $u_i$ and $u_j$.
\item $F_{G}$: set of files that a group of users $G$ possesses. $F_{G}=\bigcup_{i:u_i\in G}C_i^t$ at time $t$. For example, $F_U=F$.
\end{enumerate}

\subsection{Problem Definition} \label{probdef}
We assume that a single central server contains $n$ files represented by the set $T$. Let $m$ users represented by the set $U = \{u_1, u_2, \ldots, u_m\}$ initially obtain some files from this server, but at high cost to themselves. Let the files obtained by a user $u_i$ be represented by $C_i^0$. We consider a particular file acquisition paradigm, which we state as follows.
\begin{definition}[Random Sampling] \label{randomparadigm}
Each user picks up a file from the server with a probability $p$ (i.e. a file is not picked up by that user with probability $q=1-p$). The pickup of files is i.i.d across files and users.
\end{definition}

The parameters of the paradigm, i.e. the value of $p$ (or equivalently, $q$) can either be decided a-priori - when the users are oblivious to the number of other users in the social group and so decide the value of $p$ before entering the system - or a-posteriori - when the users appropriately choose the value of $p$ depending on $m$ and $n$. We will consider both cases.

Let $F=\bigcup_iC_i^0=\{1,2,3\ldots,f\}$ be the \emph{achievable universe}. The primary objective for each user is to obtain all $f$ files in the achievable universe $F$, since these represent all the files obtainable through exchanges. To this end, the users are interested in disseminating the files among themselves at low cost. However, since each user is selfish, file transfer between users can only occur via the give-and-take protocol, which we explain using the following definitions:
\begin{definition}[GT Criterion \cite{aggarwal2013social}] \label{GT}
Two users $u_i$ and $u_j$ are said to satisfy the GT criterion at time $t$ if $C_i^t\not\subseteq C_j^t$ and $C_j^t\not\subseteq C_i^t$. 
\end{definition}
\begin{definition}[Exchange: $u_i\leftrightarrow u_j$] \label{exchange}
Two users can only exchange files if they satisfy the GT criterion. After an exchange between users $u_i$ and $u_j$ (at time $t$, say), $C_i^{t+1}=C_j^{t+1}=C_i^{t}\bigcup C_j^{t}$.
\end{definition}
Def. \ref{GT}\& \ref{exchange} form the cornerstone of the give-and-take protocol.

We call a sequence of exchanges between users a \emph{schedule}, and represent it by $\mathcal{S}$. The set of all schedules is denoted by $\mathcal{X}$. It is clear that any schedule will lead to a situation in which no further exchanges are possible. At this stage, we represent the set of \emph{satisfied} users who have obtained all $f$ files by $U_{sat}$.

Formally, we are interested in the following problem:
\begin{align} 
\max |U_{sat}| \nonumber \\ 
\text{Subject to: } \mathcal{S}\in \mathcal{X} \label{objfunc}
\end{align}

As mentioned in Section \ref{sec:intro}, there are an exponential number of schedules in $\mathcal{X}$, and our interest lies in approximating \eqref{objfunc} in polynomial time.

\section{Recursive Algorithm} \label{recursive}
To solve \eqref{objfunc}, we present a recursive algorithm, which we call the TreeSplit algorithm, which repeatedly divides the users into two groups of equal size and then effects exchanges appropriately. It is illustrated in Fig. \ref{treefig}.
\begin{figure}
    \begin{center}
    \psfrag{B}[][t][.7]{$U$}
    \psfrag{C}[][l][.7]{$m$ users}
    \psfrag{E}[][][.7]{$G_1^1$}
    \psfrag{F}[][][.7]{$G_2^1$}
    \psfrag{A}[][r][.7][90]{$k=\log m $ levels}
    \psfrag{G}[][l][.7]{$m/2$ users}
    \psfrag{D}[][b][.7]{Level $1$ split}
    \psfrag{H}[][][.7]{Level $k$ split}
    \psfrag{I}[][][.7]{$m$ leaves}
    \includegraphics[scale=.4]{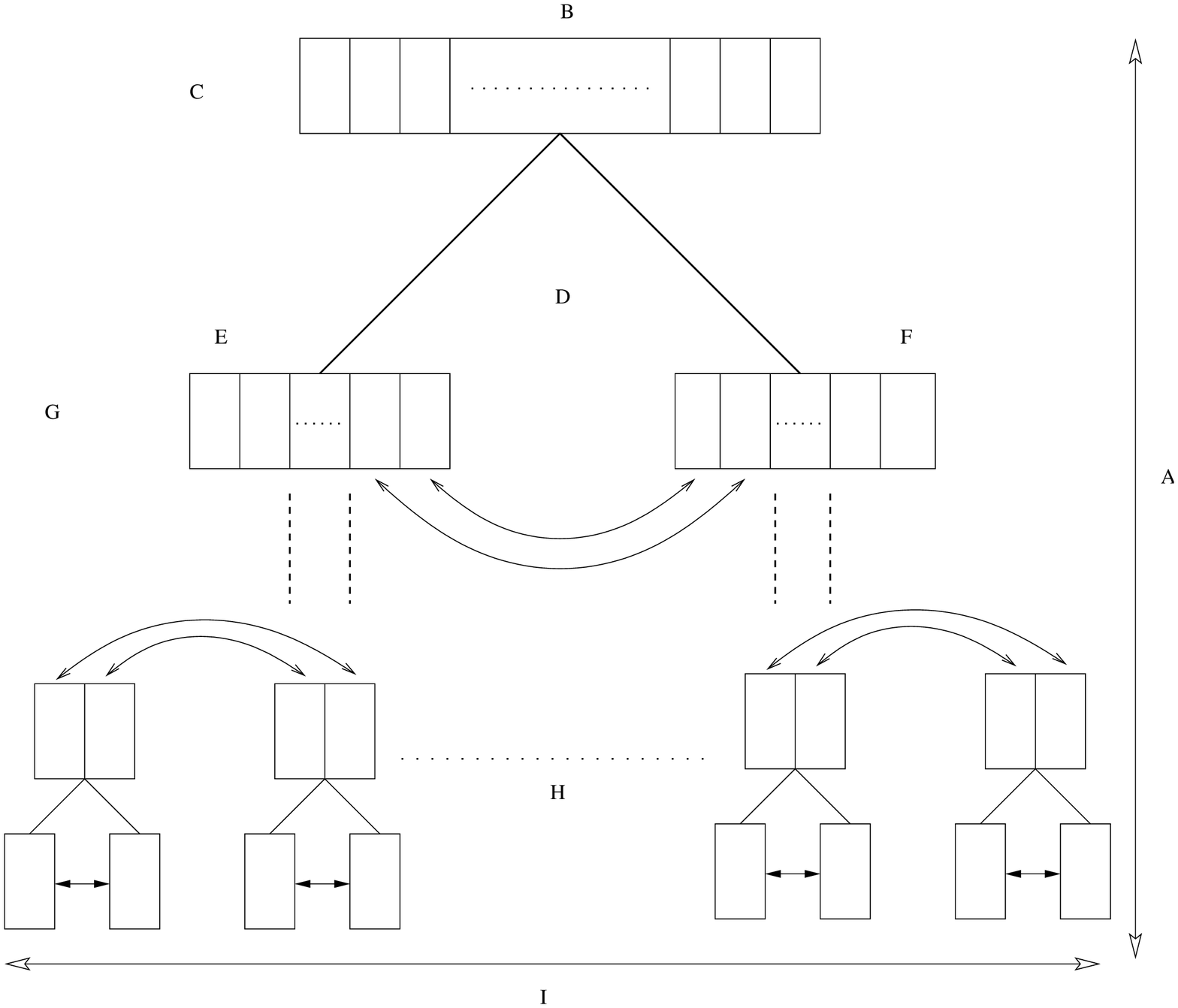}
    \caption{TreeSplit of users. $\leftrightarrow$ represents exchange, solid lines arising out of a group represent the division of that group which obeys the splitting condition.} \label{treefig}
    \end{center}
\end{figure}
Recall from Section \ref{notation} that the set of files contained by a group of users $G$ is represented by $F_{G}$. Let there be $m=2^k$ users for some $k$. Divide the $m$ users into two groups $G_1^1$ and $G_2^1$ of $m/2$ users each, satisfying the \emph{splitting condition}, defined as $F_{G_1^1}\not\subseteq F_{G_2^1}$ and $F_{G_2^1}\not\subseteq F_{G_1^1}$. We call this division a \emph{level 1} split, which is indicated by the superscript of groups $G_i^1$. If we can design a schedule to ensure a scenario in which all users in $G_1^1$ contain $F_{G_1^1}$, and all users in $G_2^1$ contain the files $F_{G_2^1}$, we can then effect pairwise exchanges between users in $G_1^1$, i.e., $u_{iG_1^1}, i=\{1,2\ldots,m/2\}$, and users in $G_2^1$, i.e., $u_{iG_2^1}, i=\{1,2\ldots,m/2\}$. In other words,
\begin{align*}
u_{iG_1}\leftrightarrow u_{iG_2} \text{  } \forall \text{ }i\in\{1,2,\ldots, m/2\} \nonumber  
\end{align*}
ensuring that all $m$ users obtain all files in $F$, maximizing \eqref{objfunc}.

Thus, essentially, we have reduced the problem to equivalent problems on two smaller groups of users $G_1^1$ and $G_2^1$, separately. Taking this recursion forward, we need to see what happens at the base case (at a level $k$ split), when a group of two users $G_j^{k-1}$ is divided into the groups $G_i^{k}$ and $G_{i+1}^{k}$ of one user each, for some $i,j$. If the division of $G_j^{k-1}$ obeys the splitting condition, then $F_{G_i^{k}}\not\subseteq F_{G_{i+1}^{k}}$ and $F_{G_{i+1}^{k}}\not\subseteq F_{G_{i}^{k}}$. Since both $G_i^k$ and $G_{i+1}^k$ consist of one user each, the splitting condition forces the GT criterion to be satisfied between the two users $G_i^k$ and $G_{i+1}^k$. So we can effect an exchange between users $G_i^k$ and $G_{i+1}^k$ and ensure that $G_j^{k-1}$ contains two users who possess all the files in $F_{G_j^{k-1}}$. Pairwise exchanges as we move up the tree will therefore solve problem \ref{objfunc}, and the result is summarised in the following Theorem.

\begin{theorem}
For $m=2^k$ for some $k$, we can use the TreeSplit algorithm to schedule exchanges such that all users obtain the entire achievable universe $F$, provided that divisions obeying the splitting condition can be enforced at all levels. \label{splitcond}
\end{theorem}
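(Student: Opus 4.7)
The plan is to prove this by strong induction on $k$, where $m = 2^k$, formalizing the argument sketched in the paragraph immediately preceding the theorem.

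For the base case $k=1$, we have a group of two users for which the splitting condition $F_{G_1}\not\subseteq F_{G_2}$ and $F_{G_2}\not\subseteq F_{G_1}$ holds. Since each singleton group's file-union $F_{G_i}$ equals the single user's holdings, the splitting condition reduces exactly to the GT criterion of Definition \ref{GT}. A single exchange is therefore admissible and, by Definition \ref{exchange}, leaves both users in possession of $F_{G_1}\cup F_{G_2}=F$.

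For the inductive step, assume the claim for groups of size $2^{k-1}$. Given $m=2^k$ users, the hypothesis of the theorem supplies a level-$1$ split into $G_1^1, G_2^1$ satisfying the splitting condition, and further splits of each subgroup down to the leaves that also obey the splitting condition. Apply the induction hypothesis separately to $G_1^1$ and $G_2^1$ (each of size $2^{k-1}$), regarding each as its own TreeSplit instance with achievable universe $F_{G_i^1}$. This yields schedules $\mathcal{S}_1, \mathcal{S}_2$ after which every user in $G_i^1$ possesses exactly $F_{G_i^1}$. Concatenate $\mathcal{S}_1$ and $\mathcal{S}_2$ and then append the $m/2$ pairwise exchanges $u_{iG_1^1}\leftrightarrow u_{iG_2^1}$; since at this point the two users in each pair hold $F_{G_1^1}$ and $F_{G_2^1}$ respectively, the level-$1$ splitting condition is identically the GT criterion for every such pair, so every appended exchange is admissible. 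After these exchanges, every user holds $F_{G_1^1}\cup F_{G_2^1}=F$, completing the induction.

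The main subtlety I would need to address carefully is that the splitting conditions, which are hypotheses on the initial file distribution via the sets $F_{G_j^\ell}$, must still be meaningful when invoked after the deeper-level exchanges have already been performed. This is not really an obstacle, however: an exchange $u_i\leftrightarrow u_j$ inside a group $G$ leaves $F_G=\bigcup_{u\in G}C^t_u$ unchanged, since it merely replaces $C_i^t$ and $C_j^t$ by their union. Hence $F_{G_j^\ell}$ is an invariant of the process for every subgroup in the tree, the splitting conditions supplied at the outset remain valid throughout, and the inductive argument above goes through without interference between the recursive calls and the top-level round of exchanges.
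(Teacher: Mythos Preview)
Your proposal is correct and is precisely a formalization, by strong induction on $k$, of the recursive argument the paper sketches in the paragraphs immediately preceding the theorem statement (the paper gives no separate proof beyond that sketch). Your added observation that $F_G$ is invariant under exchanges internal to $G$ cleanly justifies why the splitting conditions fixed at the outset remain valid as the schedule progresses, a point the paper leaves implicit.
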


Notice that the TreeSplit algorithm is valid only if $m$ is a power of $2$. We now look at what happens for a general $m$.

We define the \emph{UniquePick} algorithm, which we apply on the $m$ users as follows: Initially, let $U_{suff}=U$. We now \emph{prune} $U_{suff}$ as follows: Pick any user in $U_{suff}$. If he contains a \emph{unique} file that no other user in $U_{suff}$ contains, keep him in $U_{suff}$; otherwise throw him out. Do this in any arbitrary order until all users have been checked for uniqueness. 
\begin{proposition}
The UniquePick algorithm ensures that $U_{suff}$ finally contains all files in the achievable universe, i.e. $F_{U_{suff}}=F$. \label{uniquepick} 
\end{proposition}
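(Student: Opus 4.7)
The plan is to maintain a loop invariant: the achievable universe covered by $U_{suff}$ is unchanged at every step of the UniquePick pruning. Formally, I would index the iterations of the algorithm by $t=0,1,\dots,m$, where $U_{suff}^{(t)}$ denotes the set after $t$ users have been inspected (in whatever arbitrary order the algorithm uses), and then prove by induction on $t$ that $F_{U_{suff}^{(t)}} = F$. The base case $t=0$ is immediate from $U_{suff}^{(0)} = U$ and the definition $F = F_U$.

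For the inductive step, I would consider the $t$-th user examined, call him $u$. There are two cases. If $u$ possesses a file that no other member of $U_{suff}^{(t-1)}$ currently possesses, he is retained and $U_{suff}^{(t)} = U_{suff}^{(t-1)}$, making the invariant trivial. If $u$ is discarded, then by the definition of the pruning rule every file in $C_u^0$ is also held by some other user still in $U_{suff}^{(t-1)}$, i.e. $C_u^0 \subseteq F_{U_{suff}^{(t-1)}\setminus\{u\}} = F_{U_{suff}^{(t)}}$. Therefore $F_{U_{suff}^{(t-1)}} = F_{U_{suff}^{(t)}} \cup C_u^0 = F_{U_{suff}^{(t)}}$, and the invariant is preserved. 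Applying the invariant at $t=m$ yields $F_{U_{suff}} = F$.

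The only conceptually delicate point, and the step I would flag as the main thing to get right, is that ``coverage at the moment of removal'' must actually propagate forward: after we drop $u$ because his files were all duplicated by users in $U_{suff}^{(t-1)}$, those duplicating users may themselves be pruned at some later step. The invariant framing handles this cleanly, because any subsequent removal of one of those users must itself preserve $F_{U_{suff}}$ for exactly the same reason, so the files $u$ contributed are never lost from the group's collective holding. Beyond this observation the argument is essentially a one-line induction, and no probabilistic or combinatorial machinery is required.
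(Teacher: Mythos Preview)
Your proof is correct and follows essentially the same approach as the paper: both argue that a removed user has every file duplicated in the remaining set, so $F_{U_{suff}}$ never shrinks from its initial value $F$. Your loop-invariant formulation is a more carefully stated version of the paper's informal one-paragraph argument, and your explicit remark about later removals of the duplicating users is exactly the point the paper glosses over with the phrase ``no file of $F$ is removed from all users in $U_{suff}$ up to the end.''
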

\begin{proof}
The UniquePick algorithm removes only those users who do not have any unique files. Therefore, even if a user $u_i$ is removed from $U_{suff}$, every file $j\in C_i^0$ is present in $U_{suff}\backslash u_i$. Since $U_{suff}=U$ initially, and no file of $F$ is removed from all users in $U_{suff}$ up to the end of the Unique-Pick algorithm, $U_{suff}$ must finally contain all files in $F$.
\qed
\end{proof}

\begin{theorem}
For any $m$, provided that the splitting condition can be enforced in all divisions, the TreeSplit algorithm has a 2-approximation algorithm to \eqref{objfunc}. \label{2approx}
\end{theorem}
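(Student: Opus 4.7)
The plan is to reduce to the power-of-$2$ regime of Theorem \ref{splitcond} via the UniquePick preprocessing step. First I would apply UniquePick to $U$, obtaining $U_{suff}$ with $|U_{suff}| = m'$ and $F_{U_{suff}} = F$ (by Proposition \ref{uniquepick}); in particular, any $V$ containing $U_{suff}$ automatically satisfies $F_V = F$, so all users in $V$ end up holding all of $F$ once TreeSplit completes successfully.

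The bulk of the argument is then a matter of choosing $V$ of power-of-$2$ cardinality and invoking Theorem \ref{splitcond}. I would let $2^k$ be the smallest power of $2$ with $2^k \geq m'$, so that $2^{k-1} < m' \leq 2^k < 2m'$. When $2^k \leq m$, I form $V$ by adjoining $2^k - m'$ arbitrary users from $U \setminus U_{suff}$ to $U_{suff}$; then $|V| = 2^k$, $F_V = F$, and Theorem \ref{splitcond} (applicable by the standing hypothesis that the splitting condition can be enforced at every level) yields $2^k \geq m'$ satisfied users. The $2$-approximation then reduces to the bound $|U_{sat}^{\mathrm{opt}}| \leq 2m'$, since this would give $|U_{sat}^{\mathrm{alg}}| \geq m' \geq |U_{sat}^{\mathrm{opt}}|/2$.

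The hardest step, in my view, will be establishing $|U_{sat}^{\mathrm{opt}}| \leq 2m'$. The natural route is to exploit the defining property of $U_{suff}$---each of its $m'$ members holds a file unique to it within $U_{suff}$---and track how these $m'$ unique files can possibly propagate under the GT constraint; a charging argument should show that at most two users in any optimal satisfied set can be ``attributed'' to each unique file. A secondary obstacle is the edge case $2^k > m$, where padding is infeasible; in that regime $m' > m/2$, so one must argue separately that running TreeSplit on a carefully chosen subset of $U_{suff}$ of power-of-$2$ size, followed by post-processing exchanges to disseminate the unique files of the leftover members of $U_{suff}$, still satisfies at least $|U_{sat}^{\mathrm{opt}}|/2$ users.
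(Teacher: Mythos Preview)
Your proposal contains a genuine gap: the key inequality $|U_{sat}^{\mathrm{opt}}| \leq 2m'$ is false in general, so the charging argument you sketch cannot succeed. Here is a small counterexample. Take $n=3$ files and six users with initial sets
\[
C_1^0=\{1\},\ C_2^0=\{2\},\ C_3^0=\{3\},\ C_4^0=\{1,2\},\ C_5^0=\{2,3\},\ C_6^0=\{1,3\}.
\]
Running UniquePick in the order $u_1,\dots,u_6$ discards $u_1,u_2,u_3,u_4$ and retains $U_{suff}=\{u_5,u_6\}$, so $m'=2$. Yet the schedule $u_1\leftrightarrow u_2$, $u_3\leftrightarrow u_1$, $u_4\leftrightarrow u_5$, $u_2\leftrightarrow u_6$ satisfies all six users, giving $|U_{sat}^{\mathrm{opt}}|=6>2m'=4$. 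The point is that $m'$ depends on the (arbitrary) order in which UniquePick processes users and carries no intrinsic upper-bound information about the optimum; many users that UniquePick discards may still be satisfiable by the optimal schedule.

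The paper avoids this entirely by comparing against the trivial bound $|U_{sat}^{\mathrm{opt}}|\leq m$ and choosing the power of $2$ relative to $m$, not to $m'$. Specifically, it sets $2^y=2^{\lfloor\log_2 m\rfloor}$, so that $2^y>m/2$ automatically. If $|U_{suff}|\leq 2^y$ one pads $U_{suff}$ up to size $2^y$ with extra users and applies Theorem~\ref{splitcond}, satisfying $2^y>m/2$ users. If $|U_{suff}|>2^y$ then every user in $U_{suff}$ holds a file unique within $U_{suff}$, and the Polygon algorithm of \cite{aggarwal2013social} satisfies all of them, again more than $m/2$. Either way at least half of all users are satisfied, yielding the $2$-approximation directly. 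Your choice of the \emph{smallest} power of $2$ above $m'$ goes in the wrong direction (it can be far below $m/2$), and your edge-case treatment would also need the Polygon-type argument rather than TreeSplit on a subset of $U_{suff}$, since such a subset need not contain all of $F$.
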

\begin{proof}
We consider the power of $2$ closest to but less than $m$, and call this $2^y$ ($y=\lfloor\log_2m\rfloor$). Note that $2^y/m>1/2$, by definition.
At the end of the UniquePick algorithm, either \textbf{(i)} $|U_{suff}|\leq 2^y$ or \textbf{(ii)} $|U_{suff}|> 2^y$. In the case of condition \textbf{(i)}, we add some users from $U\backslash U_{suff}$ to $U_{suff}$ so that $|U_{suff}|= 2^y$. We know from Prop. \ref{uniquepick} that $2^y$ users contain $F$, so we can now use Thm. \ref{splitcond} to ensure that $2^y$ users obtain $F$ provided the splitting condition can be enforced in all divisions. So, at least half the users obtain the achievable universe (since $2^y/m>1/2$), giving us an approximation ratio of $2$.

In the case of condition \textbf{(ii)}, we know that more than $2^y$ users are such that each contains a unique file, and $F_{U_{suff}}=F$ (by Prop. \ref{uniquepick}). We can therefore use the Polygon Algorithm from \cite{aggarwal2013social} to schedule exchanges and ensure that all users in $U_{suff}$ obtain the universe. Again, more than $2^y$ users obtain $F$, giving us an approximation ratio of $2$. 
\qed
\end{proof}

One important assumption we made in the TreeSplit algorithm was that the splitting condition can be enforced in every division; we did not explain how the division itself is done. In the sections that follow, we will consider different regimes of $m$ and $n$ for which the splitting condition is satisfied with high probability.

\subsection{Approximation ratio of $2$ when $m=\BigO{\log n}$, $n\rightarrow\infty$} \label{lognsection}
When $m=\BigO{\log n}$, we apply the TreeSplit algorithm to the users. We \textbf{randomly divide} groups as we go down the tree. We show that all divisions done randomly satisfy the splitting condition with high probability.

Recall the Random Sampling paradigm, in which a file is not chosen by a user with probability $q$. Let $m=2^k$. If all divisions satisfy the splitting condition in this case, then the approximation ratio is $2$ by Thm. \ref{2approx}. The TreeSplit algorithm is applied on these $m$ users. During the algorithm, a total of $1+2+2^2+\ldots+2^{k-1}=2^k-1=m-1$ groups are divided ($2^j$ groups at level $j$), as is obvious from Fig. \ref{treefig}. We represent these $m-1$ groups by the set $\mathcal{K}$. We would like the splitting condition to be obeyed whenever any group $M\in \mathcal{K}$ is divided. To this end, we define an event - that the splitting condition is violated during the division of a group $M\in \mathcal{K}$ - and denote it by $A_M$. We state the following Lemma. 
\begin{lemma}
The probability of occurrence of $A_M$ when the group $M$ under consideration has some $d$ users is
\begin{align}
\mathcal{P}(A_M)&= 2\sum_{\ell=0}^{n} \dbinom{n}{\ell} (1-q^{d/2})^\ell(q^{d/2})^{n-\ell}(1-q^{d/2})^\ell. \label{probexp}
\end{align}
\end{lemma}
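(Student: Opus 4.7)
The plan is to decompose the event $A_M$ into the two directions of the splitting condition and to exploit the fact that after the division the two halves $G_1$ and $G_2$ contain disjoint sets of users. Write $A_M = A_M^{(1)} \cup A_M^{(2)}$, where $A_M^{(1)} = \{F_{G_1} \subseteq F_{G_2}\}$ and $A_M^{(2)} = \{F_{G_2} \subseteq F_{G_1}\}$. By the union bound, $\mathcal{P}(A_M) \leq \mathcal{P}(A_M^{(1)}) + \mathcal{P}(A_M^{(2)})$, and since $|G_1|=|G_2|=d/2$ and the Random Sampling paradigm is i.i.d.\ across users, the two probabilities are equal by symmetry. Hence it suffices to compute $\mathcal{P}(A_M^{(1)})$ and multiply by $2$.

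Next I would set up the per-file indicator structure. For each file $i\in\{1,\ldots,n\}$, let $X_i$ be the indicator that $i\in F_{G_1}$ and $Y_i$ the indicator that $i\in F_{G_2}$. Under Random Sampling a file fails to be picked by a specific user with probability $q$, independently across users, so $\mathcal{P}(X_i=0)=q^{d/2}$ and $\mathcal{P}(Y_i=0)=q^{d/2}$. Crucially, because $G_1$ and $G_2$ are disjoint, $X_i$ and $Y_i$ are independent, and moreover the pairs $(X_i,Y_i)$ are independent across $i$.

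Now I would condition on $|F_{G_1}|$. The number of files in $G_1$ is $\sum_i X_i$, which is $\mathrm{Binomial}(n,1-q^{d/2})$. Given that a specific set of $\ell$ files lies in $F_{G_1}$, the event $F_{G_1}\subseteq F_{G_2}$ requires that each of those $\ell$ files also appears in $F_{G_2}$; by independence of the $Y_i$'s from the $X_i$'s and across files, this conditional probability is exactly $(1-q^{d/2})^\ell$. Summing over $\ell$ and over the $\binom{n}{\ell}$ choices of which files occupy $G_1$ yields
\begin{align*}
\mathcal{P}(A_M^{(1)}) = \sum_{\ell=0}^{n}\binom{n}{\ell}(1-q^{d/2})^\ell(q^{d/2})^{n-\ell}(1-q^{d/2})^\ell,
\end{align*}
and multiplying by $2$ gives the stated expression.

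The argument is mostly bookkeeping, so I do not expect a serious obstacle, but the one subtlety worth being explicit about is the independence of $(X_i,Y_i)$: it relies on $G_1$ and $G_2$ being disjoint, which in turn comes from the fact that the TreeSplit algorithm partitions $M$ rather than, say, sampling with replacement. Without this observation one might mistakenly treat $X_i$ and $Y_i$ as correlated through the shared file $i$, which would complicate the conditional computation. Everything else is a direct application of the binomial expansion and the union/symmetry bound.
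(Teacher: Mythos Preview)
Your argument is correct and follows essentially the same route as the paper: condition on which $\ell$ files land in the first half (a $\mathrm{Binomial}(n,1-q^{d/2})$ count), multiply by the probability $(1-q^{d/2})^\ell$ that the second half also contains them, and double by symmetry. Your explicit discussion of the independence of $X_i$ and $Y_i$ via disjointness of $G_1$ and $G_2$ is exactly the point the paper leaves implicit; note only that both you and the paper are really deriving an upper bound (the union bound double-counts the event $F_{G_1}=F_{G_2}$), which is all that is needed downstream.
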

\begin{proof}
We know that group $M$ has $d$ users. Now consider a division of these $d$ users into two groups $M_1$ and $M_2$ of $d/2$ users each. If $M_1$ and $M_2$ are such that $F_{M_1}\subseteq F_{M_2}$ or $F_{M_2}\subseteq F_{M_1}$, then the splitting condition is violated and event $A_M$ occurs. Let us assume that $F_{M_1}\subseteq F_{M_2}$. Note that $(1-q^{d/2})^\ell(q^{d/2})^{n-\ell}$ is the probability that group $M_1$ has only some $\ell$ files and $\dbinom{n}{\ell}$ is the number of ways of choosing these $\ell$ files from $n$ files in $T$. Also, $(1-q^{d/2})^\ell$ is the probability that group $M_2$ also contains those $\ell$ files (making $F_{M_2}$ a superset). Summing over all possible $\ell$, and multiplying by $2$ to factor in the equivalent case in which $F_{M_2}\subseteq F_{M_1}$, we get the expression in \eqref{probexp}.
\qed
\end{proof}
Note that \eqref{probexp} can also be written as:
\begin{equation}
\mathcal{P}(A_M)= 2((1-q^{d/2})^2+q^{d/2})^n= 2(1+q^{d}-q^{d/2})^n. \label{probexp1}
\end{equation}
As stated earlier, we want to ensure that the splitting condition is met for all $M\in\mathcal{K}$. When the splitting condition is not met for any group in $\mathcal{K}$, we say that an \emph{error} has occurred. Note that the probability of error is the union of the probabilities of all $A_M$, and can be written as follows:
\begin{align}
\mathcal{P}(error)=\mathcal{P}(\bigcup_{M\in \mathcal{K}}A_M)&\leq \sum_{M\in \mathcal{K}}\mathcal{P}(A_M), \nonumber\\
					  &= 2\sum_{M\in \mathcal{K}}(1+q^{|M|}-q^{|M|/2})^n, \label{bound1}
\end{align}
where $|M|$ represents the number of users in group $M$. Since the number of terms in the RHS of \ref{bound1} is $|\mathcal{K}|=m-1<m$ and $\max_{M\in\mathcal{K}}(1+q^{|M|}-q^{|M|/2})^n$ occurs at $|M|=m$ for a sufficiently large $m$, we get
\begin{equation}					  
\mathcal{P}(error)\leq 2m (1+q^{m}-q^{m/2})^n. \label{probbound}
\end{equation}

We now enforce a constraint on file acquisition. Note that until now, we have only been concerned with all users obtaining the achievable universe $F$. But in practice, each user desires the complete universe $T$. Before discussing how each user can obtain $T$, we define a \emph{file cover}.
\begin{definition}[File Cover]
A group of users $G$ is said to be a file cover if $F_G=T$, the universe of available files.
\end{definition}
\begin{lemma}
A group of $m$ users is a file cover if $nq^m\rightarrow 0$ as $n\rightarrow\infty$.
\end{lemma}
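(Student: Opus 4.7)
The plan is to invoke the Random Sampling paradigm directly and apply a union bound over the $n$ files in $T$. Fix any file $j \in T$. Under the paradigm, each of the $m$ users independently fails to pick up file $j$ with probability $q$, so the probability that \emph{none} of the $m$ users in the group possesses file $j$ is exactly $q^m$, by independence across users.

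Next I would define, for each $j \in T$, the bad event $E_j$ that file $j$ is missed by every user in the group. The group fails to be a file cover precisely when $\bigcup_{j=1}^{n} E_j$ occurs. A straightforward union bound gives
\begin{equation*}
\mathcal{P}(\text{group is not a file cover}) \;=\; \mathcal{P}\!\left(\bigcup_{j=1}^{n} E_j\right) \;\leq\; \sum_{j=1}^{n} \mathcal{P}(E_j) \;=\; n q^m.
\end{equation*}
By hypothesis $nq^m \to 0$ as $n \to \infty$, so the probability of the group failing to cover $T$ vanishes, which is exactly the high-probability statement the paper uses throughout (``with very high probability as $n\to\infty$'').

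I do not anticipate a real obstacle here: the file-pickups are i.i.d.\ across files and users by Definition \ref{randomparadigm}, so both the per-file computation $\mathcal{P}(E_j) = q^m$ and the union bound apply without any subtlety. The only thing to flag is that the statement is implicitly a high-probability statement (the users and their file sets are random), and the bound $nq^m$ is tight enough to make the regimes $m = \Omega(\log n / \log(1/q))$ useful in the subsequent sections that relate $p$ (hence $q$) to $m$ and $n$.
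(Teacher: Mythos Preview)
Your argument is correct and essentially matches the paper's proof. The only cosmetic difference is that the paper first writes the exact probability $\mathcal{P}(D)=1-(1-q^{m})^{n}$ (using independence across files) and then bounds it by $nq^{m}$, whereas you go straight to the union bound; both yield the same inequality and conclusion.
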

\begin{proof}
We consider a suitably defined bad event, that the group of $m$ users $U$ is not a file cover ($F_U\neq T$). We call this event $D$, and note that
\begin{equation}
\mathcal{P}(D)= 1-(1-q^{m})^n\leq nq^m \label{probd}
\end{equation}
It is clear from \eqref{probd} that $\mathcal{P}(D)\rightarrow 0$ as $n\rightarrow\infty$ if $nq^m\rightarrow 0$, so the set $U$ is a file cover with high probability.
\end{proof}

\begin{lemma}
If $m=c\log n$, the TreeSplit algorithm has an approximation ratio of $2$ to \eqref{objfunc} provided that the value of $q$ (or $p$) can be chosen a-posteriori such that $-2/c<\log q<-1/c$.
\end{lemma}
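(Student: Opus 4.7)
The plan is to verify the two probabilistic hypotheses needed to invoke the machinery already established: (a) every division in the TreeSplit actually satisfies the splitting condition, so Thm.\ \ref{splitcond} (and hence Thm.\ \ref{2approx}) applies, and (b) the random pickup produces $F=T$, so that ``achievable universe'' equals ``full universe.'' Once both occur with high probability, a single union bound delivers the 2-approximation claim.

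For step (a), I would start from the bound \eqref{probbound}, namely $\mathcal{P}(error) \leq 2m(1+q^m - q^{m/2})^n$, and rewrite $1+q^m-q^{m/2} = 1 - q^{m/2}(1-q^{m/2})$. Applying the standard inequality $(1-x)^n \leq e^{-nx}$ with $x = q^{m/2}(1-q^{m/2})$ yields
\begin{equation*}
\mathcal{P}(error) \leq 2m\,\exp\!\bigl(-n\,q^{m/2}(1-q^{m/2})\bigr).
\end{equation*}
Substituting $m = c\log n$ gives $q^{m/2} = n^{(c/2)\log q}$. Under the hypothesis $\log q > -2/c$, the exponent $1 + (c/2)\log q$ is a strictly positive constant $\alpha$, so $n\,q^{m/2} = n^\alpha \to \infty$ and moreover $q^{m/2} \to 0$, whence $1-q^{m/2} \to 1$. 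Hence the exponential decays like $\exp(-\Theta(n^\alpha))$, which crushes the polylogarithmic prefactor $2m = 2c\log n$, and $\mathcal{P}(error) \to 0$.

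For step (b), I would apply the preceding lemma: a file cover is obtained as soon as $nq^m \to 0$. With $m = c\log n$ one has $nq^m = n^{1+c\log q}$, so the hypothesis $\log q < -1/c$ makes the exponent strictly negative and $nq^m \to 0$. Thus $F = T$ with high probability. Note that the interval $-2/c < \log q < -1/c$ is nonempty (both endpoints are negative with $-2/c < -1/c$), so an a-posteriori choice of $q$ is indeed available.

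Finally, a union bound over the two bad events (violation of some splitting condition; $F \neq T$) shows that with probability tending to $1$, every level of the TreeSplit admits a valid random split \emph{and} the achievable universe coincides with $T$. Conditioning on this event, Thm.\ \ref{2approx} guarantees that at least $2^{\lfloor \log_2 m\rfloor} > m/2$ users obtain $F = T$, so the algorithm is a $2$-approximation. I do not anticipate a significant obstacle: the only delicate point is checking that the exponential in $n^\alpha$ beats the $\log n$ factor coming from the union bound over the $m-1$ internal nodes of the tree, and this follows immediately from $\alpha > 0$.
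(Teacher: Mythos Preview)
Your proposal is correct and follows essentially the same approach as the paper: bound $\mathcal{P}(error)$ from \eqref{probbound} via $1-x\le e^{-x}$, substitute $m=c\log n$ to see the exponent behaves like $-n^{1+(c/2)\log q}$ (driving the error to zero when $\log q>-2/c$), then use the file-cover bound $nq^m=n^{1+c\log q}\to 0$ when $\log q<-1/c$, and conclude by Thm.~\ref{2approx}. Your explicit union bound and nonemptiness remark are minor polish on the same argument.
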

\begin{proof}
Note that for $m=c\log n$, using \eqref{probbound} and the identity that $1-x<\exp(-x)$,
\begin{align}
\mathcal{P}(error)&< 2c\log n \exp\Big({n(q^{c\log n} -q^{\frac{c\log n}{2}})}\Big), \nonumber \\ 
		&= 2c\log n \exp\Big({(n^{\frac{c\log q}{2}} -1)\cdot n^{(1+\frac{c\log q}{2})}}\Big). \label{bound}
\end{align}
The RHS of \eqref{bound} goes to zero as $n\rightarrow\infty$ when $\log q>-2/c$, so all random divisions in the TreeSplit algorithm obey the splitting condition if $\log q>-2/c$. Furthermore, from \eqref{probd}, we see that for this case,
\begin{equation}
\mathcal{P}(D)\leq nq^{c\log n} \label{probdlog}
\end{equation}
$\mathcal{P}(D)\rightarrow 0$ as $n\rightarrow\infty$ provided $\log q<-1/c$. Therefore, for $-2/c<\log q<-1/c$, the TreeSplit algorithm ensures that all users obtain the complete universe $T$ if $m$ is a power of $2$, which by Thm. \ref{2approx}, is a 2-approximation for general $m$.
\qed
\end{proof}

We now look at a more general regime, in which the number of users grows linearly with the number of files.
\subsection{Approximation ratio of $1+\epsilon$ for $m=\BigO{n}$} \label{nsection}
For $m=\BigO{n}$, we propose the \emph{Partition+TreeSplit} algorithm. There are two steps to the algorithm. In step \textbf{(i)}, the users are randomly partitioned into a set of groups $\mathcal{G}=\{G_1,G_2,\ldots\}$ where each group $G_i\in\mathcal{G}$ has $w\log n$ users. We would like for all groups $G_i\in\mathcal{G}$ to be file covers. In step \textbf{(ii)}, the TreeSplit algorithm with random divisions is applied to some $v\log n$ users in every group $G_i\in\mathcal{G}$, where $v\log n$ is a power of $2$ closest to but less than $w\log n$. Let $m=\alpha n$. The Partition+TreeSplit algorithm is illustrated in Fig. \ref{treecover}. 

\begin{lemma}
If $w$, $v$ and $q$ are such that $-2/v<\log q<-2/w$ and $v\log n$ is a power of $2$ closest to but less than $w\log n$, then the Partition+TreeSplit algorithm ensures that at least $\frac{\alpha n}{w\log n}v\log n$ users obtain the complete universe $T$. \label{splittree}
\end{lemma}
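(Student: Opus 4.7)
The plan is to reduce the lemma to showing that, across all $\frac{\alpha n}{w \log n}$ groups simultaneously, two events hold with high probability: (a) the designated $v \log n$-user subset of each group is a file cover of $T$, and (b) every random division performed by TreeSplit on that subset respects the splitting condition. Once both properties hold uniformly, Theorem \ref{splitcond} applied group-by-group forces every user in each $v \log n$-subset to obtain its achievable universe, which by (a) equals the complete universe $T$, yielding the claimed $\frac{\alpha n}{w \log n} \cdot v \log n$ satisfied users.

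For (a) I would apply the File Cover lemma to a generic $v \log n$-user subset (the i.i.d.\ Random Sampling assumption is inherited by any subset), so that a single subset fails to be a file cover with probability at most $n q^{v \log n} = n^{1 + v \log q}$. The upper bound $\log q < -2/w$, combined with $v > w/2$ (which holds because $v \log n$ is the largest power of $2$ strictly below $w \log n$), makes this exponent strictly negative, and I would union-bound over the $O(n / \log n)$ groups. For (b) I would reuse the error analysis culminating in \eqref{bound}, applied inside each single group with the parameter $c$ replaced by $v$: the lower bound $\log q > -2/v$ forces the per-group error probability to decay doubly-exponentially in $n$, so the union bound over $O(n / \log n)$ groups is automatic.

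The main obstacle is confirming that the interval $(-2/v, -2/w)$ is simultaneously non-empty and wide enough for the two vanishing-probability calculations in (a) and (b) to close together. This is exactly what is secured by choosing $v \log n$ to be the largest power of $2$ strictly less than $w \log n$: it pins the ratio $v/w$ inside $(1/2, 1]$, so the interval has non-zero width for all sufficiently large $n$, and the file-cover exponent stays bounded away from the splitting-condition threshold as $n$ grows. The remaining bookkeeping, namely that the number of groups is $\lceil \alpha n / (w \log n) \rceil$ and that each contributes exactly $v \log n$ users, then gives the claimed lower bound on $|U_{sat}|$.
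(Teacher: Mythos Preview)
Your argument for step (a) does not close. You bound the probability that a single $v\log n$-user subset fails to be a file cover by $nq^{v\log n}=n^{1+v\log q}$ and observe, via $v>w/2$ and $\log q<-2/w$, that the exponent $1+v\log q$ is negative. But you then union-bound over $\alpha n/(w\log n)$ groups, which multiplies by a factor of order $n/\log n$. The resulting bound is of order $n^{2+v\log q}/\log n$, and the hypothesis $\log q>-2/v$ forces $v\log q>-2$, i.e.\ $2+v\log q>0$. Thus the union bound diverges: under the stated constraint $-2/v<\log q$ you cannot guarantee that the $v\log n$-subsets are all file covers.

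The paper avoids exactly this trap by applying the file-cover estimate to the full $w\log n$-user group $G_i$ rather than to the $v\log n$-subset. With $w$ in place of $v$ the per-group failure probability is $nq^{w\log n}=n^{1+w\log q}$, and the condition $\log q<-2/w$ gives $1+w\log q<-1$, so the union bound over $O(n/\log n)$ groups does vanish. The remaining issue---how to pass from ``$G_i$ is a file cover'' to ``some $v\log n$ users in $G_i$ obtain all of $T$''---is handled by an extra step you omitted: the paper runs UniquePick on each $G_i$ to obtain $H_i$ with $F_{H_i}=T$, and then branches on whether $|H_i|$ exceeds $v\log n$ (Polygon algorithm) or not (pad $H_i$ back up to $v\log n$ and run TreeSplit; Proposition~\ref{uniquepick} guarantees $H_i$ is still a file cover). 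Your sketch of step (b) matches the paper, but without this UniquePick/Polygon bridge, Theorem~\ref{splitcond} only delivers the achievable universe of the $v\log n$-subset, which you have not shown equals $T$.
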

\begin{proof}
In the Partition+TreeSplit algorithm, we would like to avoid errors at both steps of the algorithm. An error $E_1$ in step \textbf{(i)} occurs if any group $G_i\in\mathcal{G}$ is not a file cover. An error $E_2$ in step \textbf{(ii)} occurs if any division in the TreeSplit algorithm does not obey the splitting condition. We are therefore interested in analyzing the probability $\mathcal{P}(E_1\cup E_2)$, which can be written as:
\begin{align}
&\sum_{G_i\in\mathcal{G}}\mathcal{P}(G_i\text{ is not a file cover}) +
\sum_{G_i\in\mathcal{G}}\mathcal{P}(\text{Divisions among }v\log n \text{ users in }G_i\text{ do not obey the}\text{ splitting condition}). \label{totalprob}
\end{align}
Each $G_i\in\mathcal{G}$ has $w\log n=\BigO{\log n}$ users, for which we have already computed the probability that $G_i$ is not a file cover in \eqref{probdlog}. Similarly, since the term in the second summation involves $v\log n=\BigO{\log n}$ users, we have computed the probability of an error in division in \eqref{bound}. Also note that each summation in \eqref{totalprob} has $\alpha n/w\log n$ terms, since there are $\alpha n/w\log n$ groups in $\mathcal{G}$. Using \eqref{bound} and \eqref{probdlog}, we see that
\begin{equation}
\mathcal{P}(E_1\cup E_2)<\frac{\alpha n}{w\log n}\bigg(nq^{w\log n} + 2v\log n e^{(n^{\frac{v\log q}{2}} -1)\cdot n^{(1+\frac{v\log q}{2})}}\bigg). \label{longexp}
\end{equation}
We see that the first term of \eqref{longexp} goes to zero when $\log q<-2/w$ and the second goes to zero when $\log q >-2/v$, so the probability of error in the Partition+TreeSplit algorithm goes to zero for $q$ such that $-2/v<\log q<-2/w$.

We now look at how many users can obtain the universe via the Partition+TreeSplit algorithm. We know from \eqref{longexp} that if $q$ is such that $\log q<-2/w$, the random partition of $m$ users into the set of groups $\mathcal{G}$ ensures that all groups in $\mathcal{G}$ are file covers with high probability. Each group has $w\log n$ users and $|\mathcal{G}|$ is equal to $\alpha n/w\log n$. Let us now try to prune each group $G_i\in\mathcal{G}$ through the UniquePick algorithm to obtain $H_i$. $H_i$ can be of $2$ types: \textbf{(a)} $|H_i|> v\log n$ or \textbf{(b)} $|H_i|\leq v\log n$. 

If $H_i$ is of type \textbf{(a)}, we know by the definition of the UniquePick algorithm that it contains more than $v\log n$ users who contain unique files. So we can apply the Polygon algorithm of \cite{aggarwal2013social} to ensure that more than $v\log n$ users in $H_i$ obtain the complete universe $T$. 
If $H_i$ is of type \textbf{(b)}, add users from $G_i\backslash H_i$ so that $|H_i|=v\log n$. Now, $|H_i|=v\log n$, which is a power of $2$ and contains the complete universe. If $v$ and $q$ satisfy $\log q>-2/v$, the TreeSplit algorithm can be applied to $H_i$ with random divisions to ensure that all $v\log n$ users obtain $T$. So for both types of groups, at least $v\log n$ users from each group can obtain the complete universe $T$. Since there are $\alpha n/w\log n$ such groups, we see that at least $\frac{\alpha n}{w\log n}v\log n$ users obtain the complete universe $T$
\end{proof}

We now consider two cases in Sections \ref{case1} and \ref{case2}. 
\subsubsection{Case i: $q$ chosen a-priori} \label{case1}
\begin{lemma}
When $m=\alpha n$ and $q$ is chosen a-priori, the Partition+TreeSplit algorithm has an approximation ratio of $2$.
\end{lemma}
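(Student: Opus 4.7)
The plan is to apply Lemma \ref{splittree} with a suitably chosen $w = w(n)$, given the a-priori choice of $q \in (0,1)$. Set $A := -2/\log q$, a positive constant depending only on $q$; the hypothesis $-2/v < \log q < -2/w$ of that lemma is equivalent (multiplying by $-\log q > 0$) to $v < A < w$.

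I would first observe that Lemma \ref{splittree} already delivers the desired bound almost for free once $w, v$ satisfy its hypotheses: because $v \log n$ is the largest power of $2$ strictly less than $w \log n$, we always have $v/w > 1/2$, so the lemma guarantees with high probability at least
$$\frac{\alpha n}{w \log n} \cdot v \log n \;=\; \alpha n \cdot \frac{v}{w} \;>\; \frac{\alpha n}{2}$$
satisfied users. Since the optimum of \eqref{objfunc} is trivially at most $m = \alpha n$, this already exceeds $\mathrm{OPT}/2$, which is the $2$-approximation.

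The remaining step is to exhibit a valid $w$ for each sufficiently large $n$. I would let $k = k(n)$ be the largest integer with $2^k \leq A \log n$ and pick $w \log n$ to be any point in the open interval $(A \log n,\, 2^{k+1})$, for instance its midpoint. Then $w > A$ immediately, while the largest power of $2$ strictly below $w \log n$ is $2^k$, giving $v \log n = 2^k < A \log n$, i.e., $v < A$. A small perturbation of $w$ handles the sparse set of $n$ for which $A \log n$ happens to coincide with a power of $2$. Feeding these $w, v$ into Lemma \ref{splittree} completes the argument.

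The main subtlety---and the structural reason the a-priori regime stalls at $\rho = 2$ rather than approaching $\rho = 1$---is that with $q$ (and hence $A$) fixed in advance, the dyadic location of $A \log n$ is not under our control, so $v/w$ cannot be driven uniformly close to $1$. In the a-posteriori setting treated in the next subsection, $q$ may be tuned to $n$, placing $A$ wherever we wish relative to the dyadic grid and thereby delivering the improved $1 + \epsilon_1$ bound.
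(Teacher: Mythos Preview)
Your proposal is correct and follows essentially the same approach as the paper: invoke Lemma~\ref{splittree} with $w$ chosen so that $\log q < -2/w$, observe that the induced $v$ satisfies $v/w > 1/2$ by the definition of ``closest power of $2$ below,'' and conclude that at least $m/2$ users are satisfied. Your construction of $w=w(n)$ is actually more explicit than the paper's (which simply asserts ``choose $w$ such that $\log q<-2/w$'' without verifying the companion condition $\log q>-2/v$); one small quibble is that perturbing $w$ cannot resolve the boundary case $A\log n = 2^k$, since any $w\log n>2^k$ forces $v\log n\ge 2^k$, but this measure-zero edge case is equally unaddressed in the paper.
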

\begin{proof}
We choose $w$ such that $\log q<-2/w$. By Lemma \ref{splittree}, we know that if we can choose a $v$ such that $v\log n$ is a power of $2$ and $\log q>-2/v$, then by the Partition+TreeSplit algorithm, at least $\frac{\alpha n}{w\log n}v\log n$ users obtain the complete universe $T$. So we choose $v$ such that $v\log n$ is the power of $2$ closest to but less than $w\log n$. We also ensure while choosing $w$ that $w\log n$ itself is not a power of $2$. Note that $v/w>1/2$, by definition. Therefore, the fraction of all users who obtain the achievable universe is $\frac{\alpha n}{w\log n}\frac{v\log n}{m}=v/w>1/2$.
\qed
\end{proof}
\subsubsection{Case ii: $q$ can be chosen a-posteriori} \label{case2}
\begin{lemma}
When $m=\alpha n$ and $q$ can be chosen a-posteriori, the Partition+TreeSplit algorithm has an approximation ratio of $1+\epsilon_1$ for some arbitrarily small $\epsilon_1>0$.
\end{lemma}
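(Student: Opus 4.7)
The plan is to exploit the a-posteriori flexibility in choosing $q$ (and hence $w$ and $v$) to push the ratio $w/v$ arbitrarily close to $1$. Recall from Lemma \ref{splittree} that for any admissible triple $(w, v, q)$ satisfying $-2/v < \log q < -2/w$ with $v\log n$ being the largest power of $2$ not exceeding $w\log n$, the Partition+TreeSplit algorithm delivers $T$ to at least $\frac{\alpha n}{w\log n}\cdot v\log n = \frac{v}{w}\,\alpha n$ users with high probability. Since $|U|=m=\alpha n$ is an upper bound on the optimum, the approximation ratio of the algorithm is at most $w/v$, and the whole problem reduces to producing a legitimate choice of $(w,v,q)$ with $w/v \le 1+\epsilon_1$.

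Fix $\epsilon_1 \in (0, 1)$. I would choose the parameters as functions of $n$ as follows: let $k(n) = \lceil \log_2 \log n \rceil$, set $v := 2^{k(n)}/\log n$ so that $v\log n = 2^{k(n)}$ is itself a power of $2$, and set $w := (1+\epsilon_1)\,v$. Then $w\log n = (1+\epsilon_1)\cdot 2^{k(n)}$ lies strictly between $2^{k(n)}$ and $2^{k(n)+1}$ (using $\epsilon_1<1$), so $v\log n$ is indeed the largest power of $2$ below $w\log n$, as required by the hypothesis of Lemma \ref{splittree}. Because $v<w$, the interval $(-2/v, -2/w)$ is non-empty, and I would pick $\log q$ inside it, e.g.\ at the midpoint; this is legitimate precisely because $q$ is now allowed to depend on $n$.

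Plugging these choices into Lemma \ref{splittree} immediately yields at least $\frac{v}{w}\,\alpha n = \frac{\alpha n}{1+\epsilon_1}$ satisfied users, giving an approximation ratio of $1+\epsilon_1$. The hard part will be verifying that the error bound \eqref{longexp} still decays to zero when $w$ and $v$ are permitted to depend on $n$ in this way --- the original statement of Lemma \ref{splittree} was phrased for $w,v$ treated essentially as constants. However, by construction both $v$ and $w$ remain within a fixed $n$-independent interval (roughly $[1,2(1+\epsilon_1)]$), so the exponents $1 + w\log q$ (driving the file-cover term) and $1 + (v\log q)/2$ (driving the splitting-condition term) can be held uniformly bounded away from $0$ with the correct signs by the midpoint choice of $\log q$ inside $(-2/v, -2/w)$. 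Consequently the two summands in \eqref{longexp} decay at the same rates as in the constant-$w,v$ analysis, and the argument of Lemma \ref{splittree} carries through without modification, completing the proof.
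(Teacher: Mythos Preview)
Your proposal is correct and follows essentially the same route as the paper's own proof: pick $v$ so that $v\log n$ is a power of $2$, take $w$ a hair larger (the paper writes $w=v+\delta$, you write $w=(1+\epsilon_1)v$), and then select $\log q$ in the resulting interval $(-2/v,-2/w)$ to invoke Lemma~\ref{splittree}. If anything you are more careful than the paper, since you explicitly confront the fact that $v$ and $w$ necessarily vary with $n$ and verify (via the midpoint choice of $\log q$) that the two exponents driving \eqref{longexp} stay bounded away from $0$ with the right signs uniformly in $n$---a point the paper glosses over.
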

\begin{proof}
We can now choose all of $w$, $v$ and $q$ as we please such that $-2/v<\log q<-2/w$, and Lemma \ref{splittree} assures us that at least $\frac{\alpha n}{w\log n}v\log n$ users obtain the complete universe $T$. We first choose $v$ such that $v\log n$ is a power of $2$. We then choose $w$ arbitrarily close to but greater than $v$ ($w=v+\delta$, say). We then choose $q$ such that $-2/v<\log q<-2/w$, and this choice is possible because of our choice of $v$ and $w$. Operating with these values of $v$, $w$ and $q$, we see that the fraction of users who finally obtain the complete universe $T$ is $\frac{\alpha n}{w\log n}\frac{v\log n}{m}=v/w\approx 1$.
\qed
\end{proof}
\begin{figure}
\centering
    \psfrag{A}[][][.75]{$m$ users}
    \psfrag{B}[][][.65]{$w\log n$ users}
    \psfrag{C}[][][.75]{Set of file covers $\mathcal{G}$}
    \psfrag{D}[][][.75]{}
    \psfrag{E}[][][.65]{$v\log n$ users}
    \psfrag{F}[][][.75]{$\frac{m}{w\log n}$ groups}
    \psfrag{G}[][][.75]{}
    \includegraphics[scale=.13]{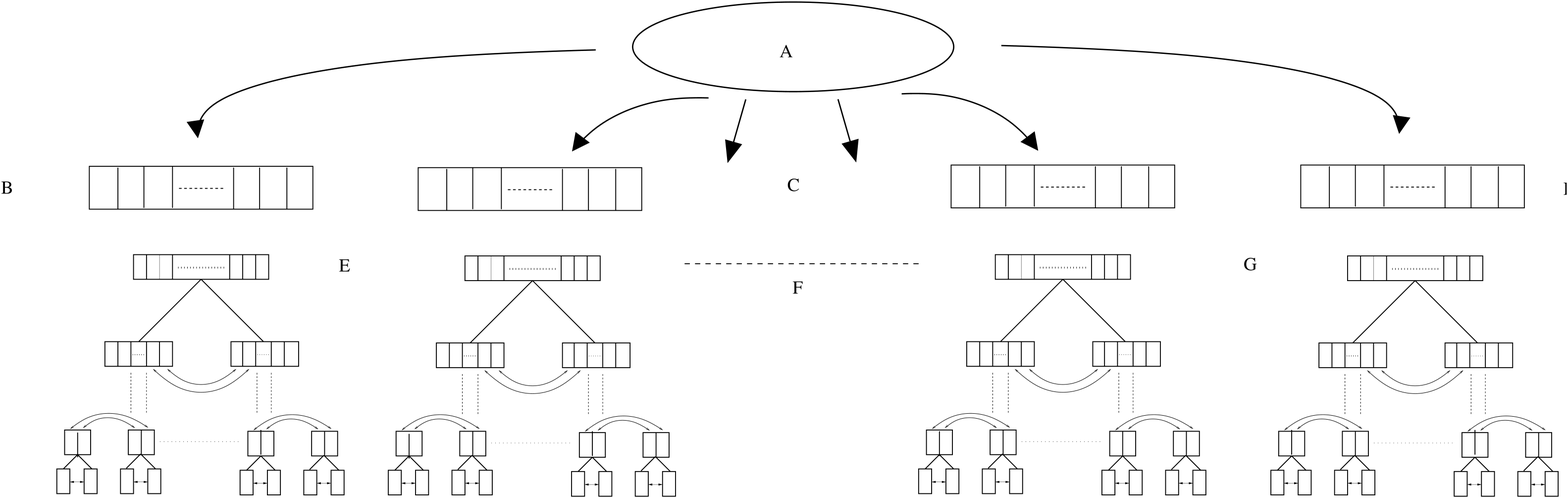}
    \caption{Partition+TreeSplit algorithm when $m=\BigO{n}$ or $\BigO{n^z}$. The users are first randomly partitioned into a set of groups $\mathcal{G}$. Each group has size $w\log n$, and is a file cover with high probability. $v\log n$ users from each group are then chosen by the UniquePick algorithm and the TreeSplit algorithm is applied on them.} \label{treecover}
\end{figure}
\subsection{Approximation ratio of $\frac{1+z}{2}+\epsilon$ for $m=\BigO{n^z}$, $z>1$}
The algorithm for this case is still the Partition+TreeSplit algorithm. We recall that it involves the following: randomly partition the users into groups of size $w\log n$, and then perform the TreeSplit algorithm with random divisions on $v\log n$ users in each group. Consequently, we follow an analysis similar to that of Section \ref{nsection}. 
\begin{lemma}
If $m=\alpha n^z$ and $w$, $v$ and $q$ are such that $-2/v<\log q<-(1+z)/w$ and $v\log n$ is a power of $2$ closest to but less than $w\log n$, then the Partition+TreeSplit algorithm ensures that at least $\frac{\alpha n^z}{w\log n}v\log n$ users obtain the complete universe $T$. \label{splittree'}
\end{lemma}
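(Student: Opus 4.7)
The plan is to mirror the proof of Lemma \ref{splittree} almost verbatim, making the necessary bookkeeping adjustments for the larger $m = \alpha n^z$. First I would define the same two failure events: $E_1$ is the event that some group $G_i \in \mathcal{G}$ produced by the random partition fails to be a file cover, and $E_2$ is the event that during the TreeSplit stage applied to the $v\log n$ chosen users of some group, some internal division violates the splitting condition. Exactly as in \eqref{totalprob}, I would bound $\mathcal{P}(E_1 \cup E_2)$ by a double union bound over the $|\mathcal{G}| = \alpha n^z / (w\log n)$ groups and reuse the per-group bounds already established: \eqref{probdlog} for the file-cover failure of a group of $w\log n$ users, and \eqref{bound} for the splitting-condition failure on $v\log n$ users.

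The only substantive change relative to Lemma \ref{splittree} is the prefactor $\alpha n^z/(w\log n)$ in place of $\alpha n/(w\log n)$, which tightens the constraint on $q$ coming from $E_1$. Specifically, in the analog of \eqref{longexp} the first term becomes of order $\frac{\alpha n^z}{w\log n}\cdot n q^{w\log n} = \frac{\alpha}{w\log n}\, n^{z+1+w\log q}$, which vanishes as $n\to\infty$ precisely when $\log q < -(1+z)/w$. The second term is $\frac{\alpha n^z}{w\log n}\cdot 2 v\log n\cdot\exp\!\bigl((n^{v\log q/2}-1)\, n^{1+v\log q/2}\bigr)$, and because $\log q > -2/v$ makes the exponent decay as a double-exponential in $n$, the polynomial factor $n^z$ out front is absorbed without any change to the condition on the lower side. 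Hence under $-2/v < \log q < -(1+z)/w$ both error events occur with vanishing probability.

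Once error-freeness is established, the downstream argument is identical to that of Lemma \ref{splittree}. I would prune each group $G_i$ via UniquePick to obtain $H_i$ and split into the two cases $|H_i| > v\log n$ and $|H_i| \le v\log n$. In the first case, Prop. \ref{uniquepick} together with the Polygon algorithm of \cite{aggarwal2013social} gives at least $v\log n$ satisfied users in $G_i$; in the second case, we refill $H_i$ from $G_i\setminus H_i$ up to size $v\log n$, and since $v\log n$ is a power of $2$ and all intermediate splits obey the splitting condition (this is exactly where $E_2$-freeness is used), Thm. \ref{splitcond} certifies that all $v\log n$ users in $H_i$ end up with $T$. Summing over the $\alpha n^z/(w\log n)$ groups yields the claimed $\frac{\alpha n^z}{w\log n}\, v\log n$ satisfied users.

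The only step that requires any real care is verifying the modified condition $\log q < -(1+z)/w$; everything else is a direct transcription of the $m=\BigO{n}$ argument. I do not anticipate a genuine obstacle beyond keeping the exponents straight in the $E_1$ union bound.
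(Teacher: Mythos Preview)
Your proposal is correct and matches the paper's own proof essentially line for line: define the same $E_1$, $E_2$, write the analog of \eqref{longexp} with prefactor $\alpha n^z/(w\log n)$, read off the conditions $\log q<-(1+z)/w$ and $\log q>-2/v$, and then repeat the UniquePick/Polygon/TreeSplit case split from Lemma~\ref{splittree}. In fact you spell out the second half in more detail than the paper, which simply defers to Lemma~\ref{splittree} ``for the sake of brevity.''
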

\begin{proof}
As in Section \ref{nsection}, we want to avoid errors $E_1$ and $E_2$ in step \textbf{(i)} and step \textbf{(ii)} of the algorithm, respectively. In this regime, however, $m=\alpha n^z$. So,
\begin{equation}
\mathcal{P}(E_1\cup E_2)<\frac{\alpha n^z}{w\log n}\Big(nq^{w\log n} + 2v\log n e^{(n^{\frac{v\log q}{2}} -1)\cdot n^{(1+\frac{v\log q}{2})}}\Big). \label{longexp'}
\end{equation}
Splitting the RHS of \eqref{longexp'} into two terms, we see that the first term goes to zero for $\log q<-(1+z)/w$ and that the second term goes to zero for $\log q>-2/v$. So the probability of error goes to zero as $n\rightarrow\infty$ when $-2/v<\log q<-(1+z)/w$.

As for the number of users who finally obtain the complete universe $T$, we follow a logic similar to that of the proof of Lemma \ref{splittree}. It has been omitted for the sake of brevity.
\qed
\end{proof}

We now consider the two cases as before.
\begin{lemma}
When $m=\alpha n^z$ for $z>1$, and $q$ is chosen a-priori, the Partition+TreeSplit algorithm has an approximation ratio of $1+z$.
\end{lemma}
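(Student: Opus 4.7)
The plan is to mirror the a-priori argument of Section \ref{case1}, now invoking Lemma \ref{splittree'} in place of Lemma \ref{splittree}. Let $r = -\log q > 0$ denote the fixed a-priori parameter. First, I would choose a constant $w$ satisfying $w > (1+z)/r$, together with the genericity condition that $w\log n$ is not itself a power of $2$. The value of $v$ is then forced by the rule that $v\log n$ be the power of $2$ closest to, but strictly less than, $w\log n$; this rule guarantees $v\log n > w\log n/2$ and hence the key inequality $v/w > 1/2$.

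Before I can invoke Lemma \ref{splittree'}, I must also verify its remaining hypothesis $\log q > -2/v$, equivalently $v > 2/r$. Since the construction gives $v > w/2 > (1+z)/(2r)$, this is automatic when $z \geq 3$. For the remaining range $1 < z < 3$ the lower bound $(1+z)/(2r)$ need not exceed $2/r$, so I would instead choose $w > \max\{(1+z)/r,\,4/r\}$ (still keeping $w\log n$ off every power of $2$); the same halving argument then yields $v > 2/r$, while still preserving $v/w > 1/2$.

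With both hypotheses of Lemma \ref{splittree'} satisfied, the lemma guarantees that at least $\frac{\alpha n^z}{w\log n}\cdot v\log n = \alpha n^z (v/w)$ users obtain the complete universe $T$. Dividing by $m = \alpha n^z$, the fraction of successful users is $v/w$, which exceeds $1/2$; therefore the approximation ratio is at most $w/v < 2$. Because $z > 1$ implies $1+z > 2$, this bound is in particular at most $1+z$, which is the claim.

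The main obstacle I anticipate is the simultaneous enforcement of the continuous lower bounds $w > (1+z)/r$, $v > 2/r$ and the discrete power-of-$2$ requirement on $v\log n$; this is precisely the technicality glossed over in the $m = \BigO{n}$ a-priori proof, and it is resolved here by slightly enlarging $w$ in the small-$z$ subcase. Beyond that the argument is essentially a direct transcription from Section \ref{case1}, and it is worth noting that the derived ratio of $2$ is in fact tighter than the stated $1+z$, the looser form presumably being chosen to parallel the $(1+z)/2+\epsilon$ a-posteriori bound in the next lemma.
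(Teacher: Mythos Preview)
There is a sign error that breaks the argument. With $r=-\log q>0$, the hypothesis $\log q>-2/v$ reads $-r>-2/v$, i.e.\ $r<2/v$, which is equivalent to $v<2/r$, \emph{not} $v>2/r$. Once the inequality is corrected, your construction no longer satisfies the hypotheses of Lemma~\ref{splittree'}: the two conditions become $w>(1+z)/r$ and $v<2/r$, and since $z>1$ this forces
\[
\frac{v}{w}<\frac{2/r}{(1+z)/r}=\frac{2}{1+z}.
\]
For $z>3$ this upper bound is below $1/2$, so taking $v\log n$ to be the power of $2$ \emph{closest} to $w\log n$ (which yields $v/w>1/2$) is incompatible with $\log q>-2/v$; your subcase split and the enlarged choice $w>4/r$ only make the situation worse, because they push $v$ further above $2/r$. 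Consequently the claimed bound of $2$ on the approximation ratio cannot hold in general, and your final remark that $2$ is ``tighter than the stated $1+z$'' is based on this mistake: the constraint $v/w<2/(1+z)$ shows the ratio obtainable from Lemma~\ref{splittree'} is at least $(1+z)/2$, which exceeds $2$ once $z>3$.

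The paper's proof avoids this by \emph{not} taking $v\log n$ as the nearest power of $2$. Instead it picks $w$ just above $(1+z)/r$ and then selects any power of $2$ for $v\log n$ lying in the dyadic window $\big(w\log n/(1+z),\,2w\log n/(1+z)\big)$; such a power always exists because the interval has ratio $2$ between its endpoints. This choice gives $1/(1+z)<v/w<2/(1+z)$, so $v<2w/(1+z)\approx 2/r$ (verifying $\log q>-2/v$) while $v/w>1/(1+z)$ yields the stated approximation ratio of $1+z$.
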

\begin{proof}
We choose $w$ such that $\log q<-(1+z)/w$. We then choose some $v$ such that $v<2w/1+z$ and $v\log n$ is a power of $2$. Note that we can always choose $v$ such that $1/1+z<v/w<2/1+z$. We are also assured that $\log q>-2/v$. Following the arguments that we made in Section \ref{case1} but now with Lemma \ref{splittree'}, we can say that at least $\alpha n^z\cdot \frac{v}{w}>\alpha n^z\cdot \frac{1}{1+z}$ users obtain the universe.
\qed
\end{proof}

\begin{lemma}
When $m=\alpha n^z$ and $q$ can be chosen a-posteriori, the Partition+TreeSplit algorithm has an approximation ratio of $\frac{1+z}{2}+\epsilon_2$ for an arbitrarily small $\epsilon_2>0$.
\end{lemma}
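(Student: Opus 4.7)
The plan is to adapt the proof of Section~\ref{case2} almost verbatim, replacing Lemma~\ref{splittree} by Lemma~\ref{splittree'}. Under Lemma~\ref{splittree'}, whenever $v$, $w$ and $q$ satisfy $-2/v<\log q<-(1+z)/w$ and $v\log n$ is a power of $2$ closest to but less than $w\log n$, the Partition+TreeSplit algorithm delivers the complete universe $T$ to at least $\frac{\alpha n^z}{w\log n}\cdot v\log n$ users. Dividing by $m=\alpha n^z$, the fraction of satisfied users is $v/w$, so the approximation ratio achieved is $w/v$. Hence the proof reduces to tuning $v$ and $w$ so that $w/v$ is as close as possible to a target value, subject to the admissibility of some $q$ in the given window.

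The key observation is that the admissibility window $(-2/v,\,-(1+z)/w)$ for $\log q$ is nonempty iff $-2/v<-(1+z)/w$, which rearranges to $w/v>(1+z)/2$. This is precisely the barrier that was absent in the $m=\BigO{n}$ case, where the corresponding condition was $w/v>1$; here the exponent $z>1$ tightens the ratio. Thus $(1+z)/2$ is the infimum of achievable approximation ratios under this parameterization, and any strictly larger value is attainable.

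To realize it, I would proceed in three steps: (i) pick $v$ such that $v\log n$ is a power of $2$ (for large $n$ this is always possible, by rounding, just as in Section~\ref{case2}); (ii) set $w=\tfrac{1+z}{2}v+\delta$ for some arbitrarily small $\delta>0$, ensuring both that $w\log n$ is not a power of $2$ and that $v\log n$ is the largest power of $2$ below $w\log n$; (iii) choose $q$ anywhere in the now-nonempty interval $-2/v<\log q<-(1+z)/w$. Lemma~\ref{splittree'} then applies, yielding $w/v=\tfrac{1+z}{2}+\tfrac{\delta}{v}$, which can be made smaller than $\tfrac{1+z}{2}+\epsilon_2$ by choosing $\delta$ small enough.

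I do not expect any genuine obstacle: this is pure parameter juggling, structurally identical to the $m=\BigO{n}$ a-posteriori proof, with the only new content being the tightened bound $w/v>(1+z)/2$ that emerges from the first term in \eqref{longexp'} because of the additional $n^{z-1}$ factor coming from $m=\alpha n^z$ rather than $m=\alpha n$. The minor bookkeeping point to check is that as $\delta\to 0$ we can still find $v$ with $v\log n$ a power of $2$ and $w=\tfrac{1+z}{2}v+\delta$ with $w\log n$ not a power of $2$, but both conditions are generic and can be satisfied for all sufficiently large $n$.
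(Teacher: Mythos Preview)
Your proposal is correct and follows essentially the same approach as the paper: choose $v$ so that $v\log n$ is a power of~$2$, set $w=\tfrac{1+z}{2}v+\delta$, pick $q$ in the resulting nonempty window, and invoke Lemma~\ref{splittree'} to obtain approximation ratio $w/v\approx\tfrac{1+z}{2}$. Your additional remarks on why $(1+z)/2$ is the natural barrier and on the bookkeeping for the power-of-$2$ constraints are sound and slightly more explicit than the paper's terse version.
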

\begin{proof}
We start by choosing $v$ such that $v\log n$ is a power of $2$. We then set $w=\frac{1+z}{2}v+\delta$. Now, we choose $q$ such that $-2/v<\log q<-(1+z)/w$, which we know is possible. So by Lemma \ref{splittree'}, we can now ensure that at least $\alpha n^z\cdot \frac{v}{w}$ users obtain all files in the universe. Since $v/w\approx2/1+z$, the fraction of users who obtain all files is $\frac{1+z}{2}+\epsilon_2$.
\qed
\end{proof}

In the last 3 sections, we have shown that the TreeSplit and Partition+TreeSplit algorithms are effective for a variety of regimes of $m$ and $n$. In the next section, we show that for all $m$ up to the order of $e^{o(n)}$, it is highly probable that there exists a schedule which ensures that at least half the users obtain the complete universe $T$, thereby showing how beneficial the Random Sampling acquisition paradigm is.
\subsection{Existence of a schedule such that at least half the users obtain the universe for any $m=\BigO{e^{o(n)}}$} \label{existencesection}
We show the existence of a schedule for any $m=\BigO{e^{o(n)}}$ by considering the splitting condition in the TreeSplit algorithm. We consider the division of a particular group $G$ of $d$ users, and define $E_G$ as the event that all possible divisions of $G$ into two equal groups do not obey the splitting condition.

Without loss of generality, let $G=\{u_1,u_2,\ldots, u_d\}$. The total number of ways to divide $d$ users in $G$ into $2$ groups with $d/2$ users each is $\binom{d}{d/2}$. Let the two groups of users obtained from the $i^{th}$  division be denoted by $G_{1i}$ and $G_{2i}$, where $1\leq i\leq \binom{d}{d/2}$. Let $\mathcal{G}_1\triangleq\{G_{1i},1\leq i\leq \binom{d}{d/2}\}$ and similarly $\mathcal{G}_2\triangleq\{G_{2i},1\leq i\leq \binom{d}{d/2}\}$.

Without loss of generality, we assume $F_{G_{2i}} \subseteq F_{G_{1i}} \; \forall \; i$. Note that $F_{G} = F_{G_{2i}} \cup F_{G_{1i}} = F_{G_{1i}} \; \forall \; i$. Let $F_G = \{1, 2, \ldots, a, a+1,\ldots, a+b\}$, again without loss of generality.

\begin{lemma}
Let $\mathcal{I} \triangleq \bigcap_{M\in\mathcal{G}_1}M$. If $E_G$ occurs, $|\mathcal{I}| = 1$. \label{culprit}
\end{lemma}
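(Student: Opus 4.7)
The plan is to prove $|\mathcal{I}|=1$ by establishing the two inequalities $|\mathcal{I}|\le 1$ and $|\mathcal{I}|\ge 1$ separately.

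The upper bound $|\mathcal{I}|\le 1$ should follow from a direct separation argument. Suppose for contradiction that two distinct users $u,v$ both belong to $\mathcal{I}$. Because $d\ge 2$, one can always construct a balanced split of $G$ that places $u$ in one half and $v$ in the other: put $u$ with some $d/2-1$ other users (excluding $v$) in one half and $v$ with the remaining $d/2-1$ in the other. Call this division index $j$. Then $u$ and $v$ lie in opposite halves of split $j$, so only one of them is in $G_{1j}$. This contradicts $u,v\in\mathcal{I}\subseteq G_{1j}$, so at most one user is in $\mathcal{I}$.

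For the lower bound $|\mathcal{I}|\ge 1$, I would exhibit a specific ``culprit'' user $u^*$ that must appear in every covering half. The key observation is that since $F_{G_{1i}}=F_G$ for every $i$, every file in $F_G$ must have at least one owner inside $G_{1i}$. So any user who is the \emph{unique} owner in $G$ of some file automatically lies in every $G_{1i}$, and hence in $\mathcal{I}$. This is consistent with the relabeling $F_G=\{1,\ldots,a,a+1,\ldots,a+b\}$ made just before the lemma: the first block naturally corresponds to the files owned uniquely by $u^*$ and the second to the remaining files, which is precisely the bookkeeping one would want for the probability computations that come next.

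The main obstacle is proving that such a uniquely-owning user actually must exist whenever $E_G$ holds. The event $E_G$ imposes $\binom{d}{d/2}$ simultaneous inclusion constraints (every balanced bipartition is $\subseteq$-comparable in its file unions), and one needs to extract from these the existence of a singleton-owned file. I would try an induction on $d$, with base case $d=2$ immediate since the splitting condition failing for $\{u_1\}$ vs $\{u_2\}$ gives $C_{u_1}\subsetneq C_{u_2}$ (up to relabeling) and forces $u_2$ to uniquely own every file in $C_{u_2}\setminus C_{u_1}$. Alternatively, I would try an extremal argument: take a user $u^*$ with the (inclusion-)maximum file set in $G$ and argue that if every file of $u^*$ were also held by another user, one could construct a balanced bipartition with incomparable file unions, contradicting $E_G$. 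Getting this extremal or inductive step clean for general $d$ — and correctly handling the case where several users are file-inclusion-tied — is where the real work lies.
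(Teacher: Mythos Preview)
Your upper bound $|\mathcal{I}|\le 1$ is exactly the paper's Case~(i): any two distinct users can be separated by some balanced split, so they cannot both lie in every $G_{1i}$.

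For $|\mathcal{I}|\ge 1$ your strategy diverges from the paper's. The paper never looks at file ownership at all; it argues purely on the level of the user-sets $G_{1i}$. Its Case~(ii) says: if $\bigcap_{i} G_{1i}=\emptyset$, then one can find $G_{1i_1},G_{1i_2}\in\mathcal{G}_1$ with $G_{1i_1}\cap G_{1i_2}=\emptyset$; since each has size $d/2$, they form a balanced partition of $G$, and the paper asserts this partition would obey the splitting condition, contradicting $E_G$. Whatever one thinks of the terseness of that step, the point is that the paper's route is a one-line Helly-type/intersecting-family argument on $\mathcal{G}_1$, with no reference to which files sit where.

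Your proposed route --- show that $E_G$ forces some file to have a \emph{single} owner, and that owner is the culprit --- has a genuine gap: the target statement is false. Take $d=4$ with $C_1=\{a\}$, $C_2=\{b\}$, $C_3=\{a,b\}$, $C_4=\emptyset$, so $F_G=\{a,b\}$. All three balanced partitions violate the splitting condition (so $E_G$ holds), yet file $a$ is owned by users $1$ and $3$ and file $b$ by users $2$ and $3$; no file is uniquely owned. The culprit here is $u_3$, who possesses \emph{all} of $F_G$ (and accordingly $|\mathcal{B}|=0$ in the notation set up after the lemma), not a user who owns something uniquely. Hence neither the induction nor the extremal argument you sketch can succeed as aimed, because the conclusion you are trying to reach (``some file has a unique owner'') is simply not implied by $E_G$. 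What \emph{is} implied is only that some user lies in every covering half; the paper obtains this directly from the set system $\{G_{1i}\}$ without ever descending to files.
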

\begin{proof}
We prove this by contradiction. Case \textbf{(i)}: Suppose $|\mathcal{I}| > 1$; then at least two users are constrained to always reside together after division. That does not tally with the fact that we are dividing $G$ in $\binom{d}{d/2}$ ways, which is a contradiction. Case \textbf{(ii)}: Suppose $|\mathcal{I}| = 0$, it would imply that $\exists \; G_{1i_1},G_{1i_2} \in \mathcal{G}_1, s.t \;G_{1i_1} \cap G_{1i_2} = \phi$; in that case, $G$ could have been divided into $G_{1i_1}$ and $G_{1i_2}$ while obeying the splitting condition. So $E_G$ cannot occur.
\qed
\end{proof}

According to Lemma \ref{culprit}, there is one user who is responsible for the event $E_G$, whom we call the \emph{culprit user}. Without loss of generality, let this user be $u_1$. Let $C_{u_1}^0 = \{1,2,\ldots,a\}$ and define $\mathcal{B} \triangleq F_G\backslash C_{u_1}^0=\{a+1,\ldots,a+b\}$.
\begin{lemma}\label{lemma1}
If $E_G$ occurs, any group $G'$ of $d/2-1$ users from $\{u_2,u_3,\ldots,u_{d}\}$ must satisfy $\mathcal{B} \subseteq F_{G'}$.
\end{lemma}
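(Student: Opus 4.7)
The plan is to exploit Lemma \ref{culprit} directly, which tells us there is a unique culprit user $u_1$ lying in every $G_{1i}$, i.e., $u_1$ is always on the ``superset'' side of every division of $G$. The goal is then to construct, for an arbitrary target $G'\subseteq\{u_2,\ldots,u_d\}$ of size $d/2-1$, a particular division of $G$ that forces $G'$ to contain all files in $\mathcal{B}$.

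First, I would form the specific split $G_1 \triangleq G'\cup\{u_1\}$ and $G_2 \triangleq G\setminus G_1$. Both have size $d/2$, so this is one of the $\binom{d}{d/2}$ admissible divisions enumerated in the setup. Since $E_G$ occurs, the splitting condition fails on this particular division; and because $u_1\in G_1$ is the unique culprit identified in Lemma \ref{culprit}, the superset side must be the side containing $u_1$. Hence $F_{G_2}\subseteq F_{G_1}=C_{u_1}^0\cup F_{G'}$.

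Next I would pick an arbitrary file $f\in\mathcal{B}$ and chase it. Since $F_G=F_{G_1}\cup F_{G_2}$ and $F_{G_2}\subseteq F_{G_1}$, we have $f\in F_{G_1}=C_{u_1}^0\cup F_{G'}$. But $\mathcal{B}=F_G\setminus C_{u_1}^0$ by definition, so $f\notin C_{u_1}^0$, forcing $f\in F_{G'}$. Since $f\in\mathcal{B}$ was arbitrary, this proves $\mathcal{B}\subseteq F_{G'}$ for any such $G'$.

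The only subtle point, and the one I would make sure to state explicitly, is the justification that $u_1$ must sit on the superset side of the constructed division: this is purely a consequence of Lemma \ref{culprit} together with the WLOG convention $F_{G_{2i}}\subseteq F_{G_{1i}}$ adopted just before the lemma. There are no nontrivial calculations, and no probabilistic ingredients enter; the argument is combinatorial and follows once the division is chosen carefully.
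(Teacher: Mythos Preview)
Your proposal is correct and follows essentially the same approach as the paper: form the division $G_{1i}=G'\cup\{u_1\}$, use Lemma~\ref{culprit} (and the WLOG convention) to conclude this is the superset side so that $F_G=F_{G_{1i}}=C_{u_1}^0\cup F_{G'}$, and then read off $\mathcal{B}\subseteq F_{G'}$ from $\mathcal{B}\cap C_{u_1}^0=\emptyset$. Your write-up is in fact more explicit than the paper's about why $u_1$ must lie on the superset side, which is a good clarification.
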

\begin{proof}
Consider any group $G'\subset G$ consisting $d/2-1$ users $\{u_{g_1},u_{g_2},\ldots,u_{g_{d/2-1}}\}$. These $d/2-1$ users along with user $u_1$ will form a group $G_{1i}\in \mathcal{G}_1$ for some $i$, since that is one of the possible divisions. Also note that $F_{G} = F_{G'}\bigcup C_{u1}^{0}$. Therefore $\{a+1,a+2,\ldots, a+b\} \subseteq F_{G'}$.
\qed
\end{proof}

\begin{lemma}\label{minimumcondition}
At least $d/2$ users from $\{u_2,u_3,\ldots,u_{d}\}$ must possess each file $f_i \in \mathcal{B}$ for $E_G$ to occur, and the probability that this happens is less than $(2p)^{\frac{db}{2}}$, where $b=|\mathcal{B}|$.
\end{lemma}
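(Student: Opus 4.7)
The plan has two independent parts, mirroring the two clauses of the statement: a deterministic consequence of $E_G$, then a probability estimate.

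For the deterministic part I would specialize Lemma \ref{lemma1} to a single file. Fix $f_i \in \mathcal{B}$. Lemma \ref{lemma1} says that every size-$(d/2-1)$ subset $G' \subseteq \{u_2,\ldots,u_d\}$ collectively covers $\mathcal{B}$, and so in particular contains at least one user possessing $f_i$. Equivalently, the set $N_i$ of users in $\{u_2,\ldots,u_d\}$ that lack $f_i$ must satisfy $|N_i| \leq d/2-2$, since otherwise one could choose $G' \subseteq N_i$ and contradict Lemma \ref{lemma1}. Hence at least $(d-1)-(d/2-2) = d/2+1 \geq d/2$ users in $\{u_2,\ldots,u_d\}$ possess $f_i$, as claimed.

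For the probability bound, let $X_{ij}$ indicate that user $u_j$ possesses file $f_i$. By Definition \ref{randomparadigm} the $(d-1)b$ indicators are mutually independent, each Bernoulli$(p)$. Writing $Y_i = \sum_{j=2}^{d} X_{ij}$, the events $\{Y_i \geq d/2\}$ for different $i$ depend on disjoint index sets and are therefore mutually independent, so
\[
\mathcal{P}\!\left(\bigcap_{i=1}^{b} \{Y_i \geq d/2\}\right) \;=\; \prod_{i=1}^{b} \mathcal{P}(Y_i \geq d/2).
\]
For each factor I would union-bound over the $\binom{d-1}{d/2}$ size-$d/2$ subsets of $\{u_2,\ldots,u_d\}$ whose users could all possess $f_i$, giving $\mathcal{P}(Y_i \geq d/2) \leq \binom{d-1}{d/2} p^{d/2}$. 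A routine binomial-coefficient estimate then converts $\binom{d-1}{d/2}^b p^{bd/2}$ into the advertised bound $(2p)^{bd/2}$.

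The conceptual work is entirely in Lemma \ref{lemma1} and in the observation that possession indicators for distinct files are independent; the only place where some care is needed is pushing $\binom{d-1}{d/2}^b$ into the clean constant $2^{bd/2}$ at the final step, which is where the main (if minor) technical obstacle lies.
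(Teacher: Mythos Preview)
Your proof is essentially identical to the paper's: the same contrapositive of Lemma~\ref{lemma1} for the first clause, and the same union bound $\mathcal{P}(Y_i \geq d/2) \leq \binom{d-1}{d/2}\,p^{d/2}$ multiplied over the $b$ independent files for the second. The paper likewise passes directly from $\bigl(\binom{d-1}{d/2}p^{d/2}\bigr)^{b}$ to $(2p)^{db/2}$ without further justification, so your flag that this last inequality is where the residual work lies is apt.
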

\begin{proof}
We prove the first part of the Lemma by contradiction. Suppose less than $d/2$ users from $\{u_2,u_3,\ldots,u_{d}\}$ possess some file $f_0\in\mathcal{B}$. This implies that at least $d/2-1$ users from $\{u_2,u_3,\ldots,u_{d}\}$ do not possess $f_0$. Without loss of generality, let users $\{u_2,u_3,\ldots,u_{d/2}\}$ not possess $f_0$. These $d/2-1$ users will not satisfy  Lemma \ref{lemma1}, which is a contradiction. Thus, each file $f_i \in \mathcal{B}$ must be possessed by at least $d/2$ users from $\{u_2,u_3,\ldots,u_{d}\}$.

We now come to the second part of the Lemma. Probability that at least $d/2$ users from $\{u_2,u_3,\ldots,u_{d}\}$ possess a file $f_0$ is $\binom{d-1}{d/2}p^{d/2}$. Since each file $f_0 \in \mathcal{B}$ is chosen independently, probability that at least $d/2$ users from $\{u_2,u_3,\ldots,u_{d}\}$ possess all $b \; (=|\mathcal{B}|)$ files is $\Big(\binom{d-1}{d/2}p^{d/2}\Big)^{b} < (2p)^\frac{db}{2}$.
\qed
\end{proof}

\begin{lemma} \label{lemma:invalid_probab}
$\mathcal{P}(E_G)$ is less than $\Big(p+q\big(q'+p'(2p)^{d/2}\big)\Big)^n$, where $q' =  q^{d-1}$ and $p' = 1 -q'$.
\end{lemma}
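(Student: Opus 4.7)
The plan is to bound $\mathcal{P}(E_G)$ by a per-file analysis, exploiting the fact that under Random Sampling the file acquisitions are independent across files. By Lemma \ref{culprit}, whenever $E_G$ occurs there is a unique culprit user; I fix this to be $u_1$ without loss of generality and analyze the restriction that $E_G$ places on each file $f\in T$ individually.

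For each file $f$, I would partition the outcomes into three disjoint cases according to which users possess $f$: (i) $u_1$ possesses $f$, which happens with probability $p$ and imposes no extra constraint since then $f\notin\mathcal{B}$; (ii) $u_1$ does not possess $f$ and no user in $\{u_2,\ldots,u_d\}$ possesses $f$, so that $f\notin F_G$ and again $f\notin\mathcal{B}$, occurring with probability $q\cdot q^{d-1}=qq'$; (iii) $u_1$ does not possess $f$ but at least one of $\{u_2,\ldots,u_d\}$ does, so that $f\in\mathcal{B}$, in which case Lemma \ref{minimumcondition} forces at least $d/2$ of those $d-1$ users to possess $f$. The probability of case (iii) is $q\cdot p'$ multiplied by the conditional probability that $f$ is held by at least $d/2$ of $\{u_2,\ldots,u_d\}$ given that at least one does; using the same union-bound idea over the $\binom{d-1}{d/2}$ choices of a majority subset as in Lemma \ref{minimumcondition}, I would bound this conditional probability by $(2p)^{d/2}$, giving a per-file contribution of at most $q\,p'\,(2p)^{d/2}$ for case (iii).

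Summing the three disjoint cases yields the per-file bound
\begin{equation*}
p+qq'+qp'(2p)^{d/2}=p+q\bigl(q'+p'(2p)^{d/2}\bigr).
\end{equation*}
Because the Random Sampling paradigm is i.i.d.\ across files, the event $E_G$ forces each of the $n$ files independently to lie in one of these three cases, so taking the product over files yields $\mathcal{P}(E_G)<\bigl(p+q(q'+p'(2p)^{d/2})\bigr)^n$, as required.

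The main obstacle will be justifying the case (iii) bound cleanly: one needs the conditional probability (given $f\in\mathcal{B}$) that at least $d/2$ of the $d-1$ non-culprits possess $f$ to be at most $(2p)^{d/2}$, and verifying this requires a careful application of the combinatorial bound from Lemma \ref{minimumcondition} to the conditional measure rather than the unconditional one. Verifying disjointness of the three cases and checking that they together exhaust the constraint imposed by $E_G$ on each file (so that no other configurations of $f$-ownership need be accounted for) is the other point where I would need to be careful.
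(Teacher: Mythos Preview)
Your approach is essentially the same as the paper's: the paper conditions on the event $E_{a,b}$ that $u_1$ holds exactly $a$ files and $|\mathcal{B}|=b$, writes $\mathcal{P}(E_G)\le\sum_{a,b}\mathcal{P}(E_{a,b})(2p)^{db/2}$, and then collapses the double sum with two applications of the binomial theorem --- which is exactly the expanded form of your per-file product $\prod_f\bigl(p+qq'+qp'(2p)^{d/2}\bigr)$. The obstacle you flag in case~(iii) (bounding the conditional probability by $(2p)^{d/2}$ rather than the unconditional one) is precisely the step the paper takes without comment when it invokes $\mathcal{P}(E_G\mid E_{a,b})<(2p)^{db/2}$ from Lemma~\ref{minimumcondition}, so your concern is well placed and matches the paper's treatment.
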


\begin{proof}
Let us denote the event in which the culprit user $u_1$ has $a$ files and $F_G$ has $a+b$ files by $E_{a,b}$.
We know from Lemma \ref{minimumcondition} that $\mathcal{P}(E_G|E_{a,b})<(2p)^\frac{db}{2}$.
But
\begin{equation} 
\mathcal{P}(E_G)=\sum_{a=0}^n\sum_{b=0}^{n-a}\mathcal{P}(E_G|E_{a,b})\mathcal{P}(E_{a,b}). \label{totalerrorprob}
\end{equation}
Note that the probability that $d-1$ users do not possess a file is $q^{d-1}$, denoted by $q'$, and
\begin{equation}
\mathcal{P}(E_{a,b})=\binom{n}{a}p^aq^{n-a} \times \binom{n-a}{b}(p')^b(q')^{n-a-b}, \label{probabilitymuser}
\end{equation}
Therefore, from \eqref{totalerrorprob} and \eqref{probabilitymuser},
\begin{align}
\mathcal{P}(E_G)<&\sum_{a=0}^{a=n}\;\sum_{b=0}^{b=n-a}\binom{n}{a}p^aq^{n-a}  \binom{n-a}{b}(p')^b(q')^{n-a-b}(2p)^{db/2}, \nonumber\\
%=&\sum_{a=0}^{a=n}\binom{n}{a}p^aq^{n-a} \sum_{b=0}^{b=n-a} \binom{n-a}{b}\big(p'(2p)^{d/2}\big)^{b}(q')^{n-a-b}, \nonumber\\
=&\sum_{a=0}^{a=n}\binom{n}{a}p^aq^{n-a}\big(q'+p'(2p)^{d/2}\big)^{n-a}, \nonumber\\
=&\Big(p+q\big(q'+p'(2p)^{d/2}\big)\Big)^n. \qquad \qquad \qquad\quad\label{p'eqn} 
\end{align}
\qed
\end{proof}
\begin{theorem}
For $m = \BigO{\exp(n^{1-\epsilon_3})}$ ($\epsilon_3>0$) and $p < 1/2$, there exist divisions obeying the splitting condition for all groups in the TreeSplit algorithm as $ n \rightarrow \infty$.
\end{theorem}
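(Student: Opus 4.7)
The plan is to combine Lemma~\ref{lemma:invalid_probab} with a union bound over the $m-1$ internal nodes of the TreeSplit, after first deducing a uniform tail bound $\mathcal{P}(E_G)\leq(1-\gamma)^n$ from the lemma. Starting from the base of the bound in Lemma~\ref{lemma:invalid_probab}, I would observe the algebraic identity
\[
1-\bigl(p+q(q'+p'(2p)^{d/2})\bigr)\;=\;q\bigl(1-q^{d-1}\bigr)\bigl(1-(2p)^{d/2}\bigr),
\]
and note that for every $d\geq 2$ the first factor satisfies $1-q^{d-1}\geq 1-q=p$ while, crucially using the hypothesis $p<1/2$, the second factor satisfies $1-(2p)^{d/2}\geq 1-2p>0$. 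Setting $\gamma:=pq(1-2p)>0$, this yields $\mathcal{P}(E_G)<(1-\gamma)^n$ uniformly over every eligible group $G$ with $|G|\geq 2$.

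Next I would run TreeSplit adaptively, choosing any valid split at each internal node; the tree has exactly $m-1$ internal nodes, and the schedule can be executed if and only if none of them satisfies the bad event $E_G$. Let $P(d)$ denote the resulting failure probability on a size-$d$ group. Then $P(2)\leq(1-\gamma)^n$ (for a two-user group a valid split is just the GT criterion between the two users), and more generally $P(2d)\leq(1-\gamma)^n+2P(d)$, where the first summand accounts for the event that the current group admits no valid split and the $2P(d)$ term bounds the probability that at least one of the two adaptively chosen size-$d$ children cannot itself be decomposed all the way down to singletons. Telescoping gives $P(m)\leq(m-1)(1-\gamma)^n<m(1-\gamma)^n$. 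For $m=O(\exp(n^{1-\epsilon_3}))$ the logarithm of this bound is $O(n^{1-\epsilon_3})+n\log(1-\gamma)$, and since $\log(1-\gamma)$ is a strictly negative constant the linear-in-$n$ term swamps the sublinear one, so $m(1-\gamma)^n\to 0$ and valid divisions exist at every level of the TreeSplit with probability tending to $1$.

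The main obstacle is a careful justification of the recursion used in the union-bound step. Once a greedy valid split has been chosen, the two resulting child groups $G_1,G_2$ are themselves random, measurable with respect to the parent's file realisations, so the inequality ``failure probability on $G_i$ is at most $P(d)$'' is not automatic and calls for an exchangeability argument across the user labels. A clean workaround is to fix a uniformly random balanced tree structure independently of the files and apply linearity of expectation over its $m-1$ internal nodes: the expected number of nodes whose group lies in the bad event is then at most $(m-1)(1-\gamma)^n$ by the uniform tail bound, and a Markov/positive-probability argument extracts an executable schedule. The uniform tail estimate and the final asymptotic computation are routine once this union bound is in place.
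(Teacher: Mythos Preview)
Your approach is the paper's: invoke Lemma~\ref{lemma:invalid_probab}, extract a uniform tail $\mathcal{P}(E_G)<(1-\gamma)^n$, and union-bound over the $m-1$ internal nodes. Your factorisation $1-\bigl(p+q(q'+p'(2p)^{d/2})\bigr)=q(1-q^{d-1})(1-(2p)^{d/2})$ is correct and yields exactly the constant $\gamma=pq(1-2p)$ the paper obtains by evaluating at $d=2$; the paper simply asserts that the maximum of the base over $d$ is attained at $d=2$, while you prove the uniform bound directly. The final asymptotic step, $m(1-\gamma)^n\to 0$ for $m=\BigO{\exp(n^{1-\epsilon_3})}$, matches the paper's conclusion.

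The adaptivity concern you raise is genuine, and the paper glosses over it too: it writes $\sum_{G'\in\mathcal{K}}\mathcal{P}(E_{G'})$ without addressing that $\mathcal{K}$ is file-dependent once the splits are chosen greedily. However, your proposed workaround does not close the gap. Fixing a balanced tree $T$ independently of the files and establishing $\neg E_G$ at every internal node of $T$ only shows that each such group admits \emph{some} valid equal split; it says nothing about whether the \emph{particular} split $G\to(G_L,G_R)$ prescribed by $T$ is valid. If the root's $T$-split fails you must deviate from $T$, and then the children are no longer nodes of $T$, so you have no control over their $E$-events. Thus ``no node of $T$ has $E_G$'' does not produce an executable schedule along $T$, nor along any other specific tree. (Controlling the per-split events $A_G$ on a fixed tree instead fails for a different reason: for large $|G|$ both halves are file covers with overwhelming probability and the fixed split violates the splitting condition.) The recursion $P(2d)\le(1-\gamma)^n+2P(d)$ is the right target, but neither your exchangeability remark nor the random-tree averaging justifies it as written; this is a gap shared with the paper's own argument.
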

\begin{proof}
Let $\mathcal{K}$ represent the set of $m-1$ groups which are divided in the TreeSplit algorithm. We say that an \emph{error} occurs if $\exists \;G' \in \mathcal{K}$, for which all divisions do not satisfy splitting condition.
\begin{equation}
\mathcal{P}(error) \leq \sum_{G' \in \mathcal{K} }  \mathcal{P}(E_{G'}) \leq |\mathcal{K}| \Big(\max_{G'\in\mathcal{K}} \{\mathcal{P}(E_{G'})\}\Big) \label{maxprobab}
\end{equation}
From equation \eqref{p'eqn}, it is easy to see that $\max\{\mathcal{P}(E_{G'})\}$ occurs $G'$ has minimum number of users, i.e. $d=2$, and is equal to $\Big(p+q\big(q+2p^2\big)\Big)^n$. Also, $|\mathcal{K}|=m-1<m$. Therefore, from equation \eqref{maxprobab}, we obtain
\begin{equation}
\mathcal{P}(error) \leq  \BigO{\exp(n^{1-\epsilon})}\big(p+q(1-p+2p^2)\big)^n \label{lastequation}
\end{equation}
For $p<1/2$, the RHS of equation \eqref{lastequation} tends to zero as $n\rightarrow\infty$. This implies that $\forall \;G' \in \mathcal{K}$ at least one division exists which satisfies the splitting condition. 
\qed
\end{proof}
So the TreeSplit algorithm with some method of division exists with high probability even for $m = \BigO{e^{o(n)}}$, and provides a schedule through which at least half the total number of users obtain the complete universe.
\section{Simulations}
We ran simulations to verify our result from Section \ref{existencesection}. We considered large $m$ when compared to $n$. We divided users as we would in the TreeSplit algorithm, and noted an error whenever any one group in the tree was such that all possible divisions violated the splitting condition. Shown in Fig. \ref{simulation} is a plot of the lowest value of $n$ for a given $m$ for which the percentage of error cases was less than $1\%$. Notice that for $p=0.10$, even if there are $2^{15}$ users in the system and only $160$ files, $99\%$ of cases are such that all divisions obey the splitting condition, showing that it is very probable that there exists a schedule through which $2^{14}$ users obtain all $160$ files even though each user only starts out with around $160\times0.10=16$ files.
\section{Conclusions and Future Work}
We presented a random file acquisition paradigm and showed constant approximation algorithms for a variety of regimes under the give-and-take file-exchange protocol. We also showed that it is beneficial for users to acquire files using the Random Sampling file acquisition paradigm (with small $p$, to keep cost low) by showing that there exists a schedule that ensures at least half the users obtain the universe with high probability. Future work in this area could involve the investigation of other file-acquisition paradigms and comparisons of their performance with respect to the give-and-take protocol. It could also involve an interesting game theoretic question thrown up by the GT criterion, as to what are the minimum number of files each node should download from the server to ensure that it finally gets all the files that it desires.
\begin{figure}
    \begin{center}
    \psfrag{ylabel}[][][.75]{Number of files $n$}
    \psfrag{xlabel}[][t][.75]{$\log_2m$}
    \psfrag{p = 0.05}[][][.6]{$p=0.05$}
    \psfrag{p = 0.1}[][][.6]{$p=0.1$}
    \psfrag{p = 0.2}[][][.6]{$p=0.2$}
    \psfrag{p = 0.3}[][][.6]{$p=0.3$}
    \psfrag{p = 0.45}[][][.6]{$p=0.45$}
    \psfrag{p = 0.5}[][][.6]{$p=0.5$}
    \psfrag{p = 0.6}[][][.6]{$p=0.6$}
    \psfrag{p = 0.9}[][][.6]{$p=0.9$}
    \psfrag{30}[][][.8]{$30$}
    \psfrag{70}[][][.8]{$70$}
    \psfrag{110}[][][.8]{$110$}
    \psfrag{150}[][][.8]{$150$}
    \psfrag{190}[][][.8]{$190$}
    \psfrag{230}[][][.8]{$230$}
    \psfrag{270}[][][.8]{$270$}
    \psfrag{3}[][][.8]{$3$}
    \psfrag{5}[][][.8]{$5$}
    \psfrag{7}[][][.8]{$7$}
    \psfrag{9}[][][.8]{$9$}
    \psfrag{11}[][][.8]{$11$}
    \psfrag{13}[][][.8]{$13$}
    \psfrag{15}[][][.8]{$15$}
    \includegraphics[scale=.7]{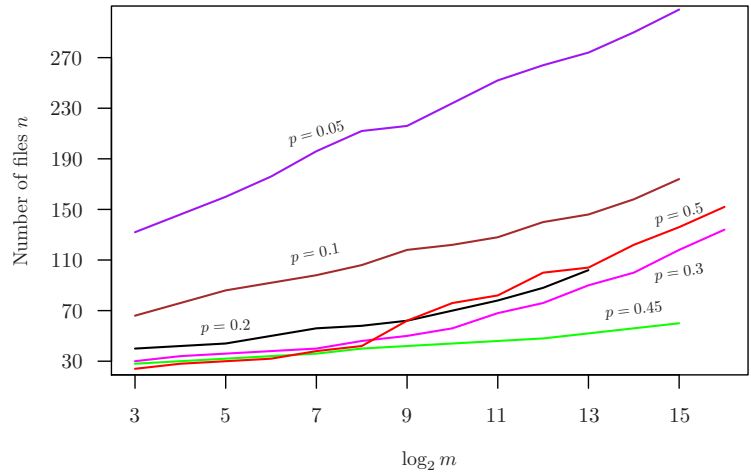}
    \caption{Plot for different values of $p$ of the minimum value of $n$ for which $99\%$ of cases are such that there exist divisions for all groups, with $\log_2m$.} \label{simulation}
    \end{center}
\end{figure}
\bibliographystyle{IEEEtran}
\bibliography{research}
\end{document}